
\documentclass[authoryear,review]{elsarticle}
\makeatletter
\def\ps@pprintTitle{%
	\let\@oddhead\@empty
	\let\@evenhead\@empty
	\def\@oddfoot{}%
	\let\@evenfoot\@oddfoot}
\makeatother

\usepackage[utf8]{inputenc}
\usepackage{fontenc}[t1]
\usepackage[hidelinks]{hyperref}
\usepackage{mathptmx} 
\usepackage{setspace}
\usepackage{amssymb}
\usepackage{graphicx}
\usepackage{subfig}
\usepackage{caption}
\usepackage{amsfonts}
\usepackage{hyperref}
\usepackage{amsmath}
\usepackage{amsthm} 
\theoremstyle{plain}

\usepackage{color}
\usepackage{colordvi}

\newtheorem{prop}{Proposition}[section]
\newtheorem{defi}[prop]{Definition}
\newtheorem{rem}[prop]{Remark}

\newtheorem{theorem}[prop]{Theorem}

\newtheorem{cor}[prop]{Corollary}

\newcommand{\komment}[1]{}

\usepackage[right]{lineno}

\parindent0pt
\sloppy
\textheight23cm
\textwidth17cm
\oddsidemargin0pt
\addtolength{\topmargin}{-1cm}
\pagestyle{plain}
\marginparwidth2cm
\onehalfspacing
\parskip=0.2\baselineskip
\begin{document}

\begin{frontmatter}
	
	\title{Exact and approximate formulas for contact tracing on random trees}
	
	\author[TUMGarching]{Augustine Okolie\corref{mycorrespondingauthor}} 
	\author[TUMGarching,ICB]{Johannes M\"uller} 
	
	\address[TUMGarching]{Center for Mathematical Sciences, Technische Universit\"at M\"unchen, 85748 Garching, Germany}
	\address[ICB]{Institute for Computational Biology, Helmholtz Center Munich, 85764 Neuherberg, Germany
	\par\medskip
	Received at Mathematical Biosciences 16 October 2019; Received in revised form 23 January 2020;\\ Accepted 23 January 2020; Available online 31 January 2020\\
	DOI \url{10.1016/j.mbs.2020.108320}}

\cortext[mycorrespondingauthor]{Corresponding author.\\
 Email address: \url{augustine.okolie@tum.de} (A.Okolie).}

	\begin{keyword}
		Stochastic SIR model \sep tree\sep network \sep contact tracing \sep branching process \sep message passing model 
		\MSC[2010]  92D30, 61D30
	\end{keyword}

\begin{abstract}
We consider a stochastic susceptible-infected-recovered (SIR) model with contact tracing on random trees and on the configuration model. On a rooted tree, where initially all individuals are susceptible apart from the root which is infected, we are able to find exact formulas for the distribution of the infectious period. Thereto, we show how to extend the existing theory for contact tracing in homogeneously mixing populations to trees. Based on these formulas, we discuss the influence of randomness in the tree and the basic reproduction number. We find the well known results for the homogeneously mixing case as a limit of the present model (tree-shaped contact graph). Furthermore, we develop approximate mean field equations for the dynamics on trees, and -- using the message passing method -- also for the configuration model. The interpretation and implications of the results are discussed.
\end{abstract}
	
\end{frontmatter}


\section{Introduction}

Most control measures, such as mass screening or mass vaccination, involve testing and treating random samples of individuals in a population \citep{bauch2000moment, ferguson2000more}. In contrast, contact tracing or ring vaccination works by further targetting partners or neighbours of an identified infected individual (called an 'index case'). For contact tracing to take effect, network information about local contact structures and correlations among individuals have to be accounted for \citep{klinkenberg2006effectiveness}. Model analysis showed that contact tracing reduces disease prevalence by breaking chains associated with routes of transmission \citep{eames2007contact}, a point that we will discuss below from a different perspective. Several intensive simulation studies have investigated the influence of contact graph on contact tracing~\citep{Kiss2005,Kiss2007}. From an ecological point of view, contact tracing has been dubbed "hyperparasitism" because it spreads through the network in the same way as the disease but targets only infected individuals
 \citep{mills1996prospective, sullivan1999hyperparasitism}.
\par\medskip

The analysis of models for contact tracing is more involving and mathematically challenging than models e.g.\ targeting random screening, as it is necessary to address local correlations. 
Several approaches are present in the literature. One major approach that focuses on correlations is pair approximation~\citep{keeling1999,house2010,eames2003contact}. Originally, pair approximation has been developed to better analyse the effect of correlations in stochastic processes on graphs, e.g., infected individuals tend to cluster, and therefore the spread of an epidemic slows down. As contact tracing targets on links between infected individuals, pair approximation is useful also in that context. Frazer et al.~\citep{fraser2004} did focus on the age of infection and analysed an (deterministic) age-structured model. In an homogeneously mixing population, an approach often used is the analysis of a branching process with dependencies~\citep{muller2000contact,klinkenberg2006effectiveness,muller2007estimating,muller2016effect,Ball2011,Ball2015}. The advantage of this method is an exact analysis of the corresponding models, at least during the onset of an epidemic. It is possible to handle an SIS and SIR model, as long as the onset of the infection is considered. Up to now, this method was restricted to a complete contact graph and a large (infinite) population. The analysis of epidemics on more realistic graphs is more challenging ~\citep{barbour1990epidemics}.
\par\medskip 

However, the underlying idea of the branching process approach to contact tracing resembles that of the message passing method for a stochastic SIR dynamics on a tree~\citep{karrer2010message,wilkinson2014message, kiss2015generalization}. Starting from this observation, it became clear that it is possible to extend the branching theory for contact tracing to stochastic SIR models on graphs. And indeed, as we will show below, it is possible to obtain results resembling that in~\citep{muller2000contact,muller2016effect}. The present work starts off with an SIR model on a contact graph that is a tree.  Initially, only the root is infected while all other nodes are susceptible. It is possible to obtain integro-differential equations for the probability to be infectious at a given age of infection. Central information, as the reproduction number, can be analysed, and a suited limit connects the results on a graph with well known results for homogeneous populations ~\citep{muller2000contact,muller2016effect}. From that micro-level analysis, we derive by means of heuristic analysis, a mean field equation. Based on even more heuristic arguments, we apply the theory to a non-tree contact graph: the configuration model.

\section{Model and analysis}

We consider a stochastic SIR model on a contact graph given by a rooted random tree \citep{newman2002random}. The nodes represent individuals, and can assume the states ``susceptible'', ``infected and infectious'', or ``recovered''. At time zero, we assume that the root is the only infected individual while all other individuals are susceptible. Importantly, the root has no infector. Two individuals connected by an edge have contacts at rate $\beta$. That is, the time between two contacts is exponentially distributed with rate $\beta$; each contact between an infected and a susceptible node transmits the infection. 
Individuals recover either undetected at rate $\alpha$, or they become diagnosed (rate $\sigma$). Diagnosed individuals are treated and recover. At the same time, the diagnosed individuals become index cases. Index cases are asked to name their contact partners, who then are called in to see a doctor. In that, an infected neighbour of an index case has probability $p$ to be discovered. In one-step tracing, we stop after one step, while in recursive tracing, infected individuals that are detected by contact tracing form new index cases. Any recovered individual acquires a life-long immunity.\par\medskip

We denote a node (an individual) $A$ as a "downstream" of node $B$, if the shortest path between the root and node $A$ passes through node $B$ (additionally, all non-root individuals are downstream from the root's perspective). Similarly, an edge of a node is called downstream if it connects to a downstream node. The random variable $K$ denotes the number of downstream edges, such that the total number of edges for a non-root individual is $K+1$. The number of downstream nodes are assumed to be i.i.d. for all individuals.\par\medskip 

Below we investigate the probability for an infected individual to still be infectious at age of infection $a$ (often, we will just write ``age'', as we do not use chronological age but only age of infection). This probability depends in general on the location of the individual within the contact tree because of contact tracing. Let the generation $i$ of a node be the distance between the node and the root (w.r.t.\ the standard graph metric). We define 
\begin{eqnarray}
\kappa_i(a) = P(\mbox{a randomly chosen infected node of generation } i \mbox{ is infectious at age of infection }a),
\end{eqnarray}
which turns out to be central for our analysis of the process. 
In order to obtain this probability, we divide the tracing process in ``backward'' and ``forward'' tracing: We call a tracing event ``backward tracing'' if an individual is detected via a downstream index case, and ``forward tracing'', if the index case is upstream. In backward tracing we artificially switch off all forward  tracing events, while in forward tracing, we only allow for forward tracing events. Of course, this is done for mathematical convenience only. In reality, we always find the full tracing process. This full process can be easily understood as a combination of forward- and backward tracing~\citep{muller2000contact}. \par\medskip 
{\bf Notation:} The symbol $\kappa$ will appear with several sub- and superscripts. In order to avoid confusion, we summarize the different roles of $\kappa$ here: Apart from forward- backward- and full tracing, we will consider one step and recursive tracing. Therefore, we use $\kappa_{*,i}^+(a)$ for forward tracing, $\kappa_{*,i}^-(a)$ for backward tracing, and $\kappa_{*,i}(a)$ for full tracing. The asterisk 
$\ast$ is either ``r'' (for recursive) or ``o'' (for one-step tracing). That is, $\kappa_{r,i}^+(a)$ refers to the probability to be infectious at age (of infection) $a$ for an individual of generation $i$, subject to recursive forward tracing. We furthermore denote by  $$\widehat\kappa(a)=e^{-(\alpha+\sigma)a}$$ 
the probability to be infectious at age $a$ if no tracing takes place ($p=0$). Last, in backward tracing, we note that only the downstream individuals of an (infected) focal individual would be ``actors'' in the tracing process. Hence, the generation of the focal individual plays no role such that the contact graph becomes an infinite tree. So $\kappa_\ast^-(a)$ is the (generation-independent) probability in case of backward tracing. Importantly, as we don't know the exact location of this randomly selected focal individual with $k$ downstream edges on the contact graph, we define $\kappa_{*,i+1}^-(\cdot)$ to be the expected value of $\kappa_{*,k,i+1}^-(\cdot)$ (the probabilities for each of the $k$ downstream individuals of a randomly selected infected individual to still be infectious at some age $\cdot$) over all possible $k$, where the random variable $K$ takes on $k$.

\subsection{Backward tracing}
In backward tracing, an infected individual can only be traced through his/her infectee but not through his/her infector. In the following subsections, we will discuss the recursive and one-step mode of backward tracing.

\subsubsection{Recursive contact tracing}
Let  $\kappa_{r,i}^-(a)$ denote the probability that a focal individual of  generation $i$ is still infectious at age $a$ of infection, if only recursive backward tracing takes place (no forward tracing).

\begin{theorem}\label{backRecurs}
Let $G(s) = E(s^{K})$. We find  $\kappa_{r,i}^-(a) = \kappa_r^-(a)$, where $\kappa_r^-(a)$ is determined by
\begin{eqnarray} \label{eq:2}
\kappa_r^-(a) 
&=& 
e^{-(\alpha+\sigma)a}\, 
G\left(\,\,
1- p\int_0^a\bigg[1-e^{-\beta (a-\tilde a)}\, \bigg]
\bigg[-\frac d {d \tilde a}\kappa_r^-(\tilde a)\, 
-\alpha\kappa_r^-(\tilde a)\,\bigg]\, d\tilde a\,\, \right).
\end{eqnarray}
\end{theorem}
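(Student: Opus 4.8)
The plan is to realize the end of the focal individual's infectious period as the first time one of several independent ``clocks'' rings, and then to exploit the independence of the downstream subtrees, encoded in $G$, to factorize the survival probability. First I would dispose of the generation dependence. In backward tracing only downstream individuals act on the focal node, so the contact graph relevant to a focal individual of generation $i$ is just the subtree rooted at it. Since the numbers of downstream edges are i.i.d.\ across nodes, this subtree has a law independent of $i$; hence the probability of being infectious at age $a$ is the same in every generation, and I may drop the index and write $\kappa_{r,i}^-(a)=\kappa_r^-(a)$.

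Next, fix a focal individual $F$ infected at age $0$ with $K$ downstream neighbours $C_1,\dots,C_K$. The node $F$ leaves the infectious state at the first occurrence of: undetected recovery (an $\mathrm{Exp}(\alpha)$ clock $U$), direct diagnosis (an $\mathrm{Exp}(\sigma)$ clock $D$), or a successful backward trace from some $C_j$. For a single child $C$ I introduce the ``naive'' trace time $V=\tau+R$, thinned with probability $p$ (and set to $+\infty$ otherwise), where $\tau\sim\mathrm{Exp}(\beta)$ is the first contact time along the edge $F$--$C$ and $R$ is the age at which $C$ becomes an index case under its own independent downstream process. The key step is to argue that $F$'s true removal time equals $\min(U,D,V_1,\dots,V_K)$ with these independently constructed $V_j$, even though $C_j$ can in reality only be infected while $F$ is still infectious. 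This is a ``first-event'' argument: if the minimum is attained at $U$ or $D$ no trace occurs earlier, while if it is attained at some $V_{j^\ast}$ then in particular the contact time $\tau_{j^\ast}$ precedes all competing clocks, so $F$ is infectious at $\tau_{j^\ast}$ and the infection, hence the trace, genuinely takes place. In either case the coupling through ``$F$ must be infectious at contact'' does not alter the minimum.

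Granting this decomposition, the survival probability factorizes. Conditioning on $K$ and using that the recovery/diagnosis clocks and the $K$ per-child trace times are independent, $P(U>a)P(D>a)=e^{-(\alpha+\sigma)a}=\widehat\kappa(a)$ and each child contributes an independent factor $P(V>a)=1-q(a)$; taking the expectation over $K$ turns $\prod_j(1-q(a))$ into $G(1-q(a))$, which already yields the outer shape $\kappa_r^-(a)=e^{-(\alpha+\sigma)a}\,G(1-q(a))$. It remains to identify $q(a)=P(V\le a)$. I would write $V=\tau+R$ as a convolution: the density of $R$, the age at which $C$ becomes an index case, is the total outflux from the infectious state minus the undetected-recovery outflux, namely $-\tfrac{d}{d\tilde a}\kappa_r^-(\tilde a)-\alpha\kappa_r^-(\tilde a)$, since recovering undetected removes $C$ without producing an index case. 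Convolving with $P(\tau\le a-\tilde a)=1-e^{-\beta(a-\tilde a)}$ and multiplying by the discovery probability $p$ reproduces exactly the integral inside $G$. Because $C$'s process is again recursive backward tracing and hence governed by the same $\kappa_r^-$, this closes the relation into the self-consistent fixed-point equation \eqref{eq:2}.

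The step I expect to be the main obstacle is the rigorous justification of this competing-risks decomposition: that the dependence introduced by ``$C$ can be infected only while $F$ is infectious'' may be ignored when computing the survival probability, and that the per-child trace events may legitimately be treated as mutually independent despite all sharing the focal node $F$. Once that independence is secured, identifying the index-case flux as $-\tfrac{d}{d\tilde a}\kappa_r^--\alpha\kappa_r^-$ and assembling the convolution are routine.
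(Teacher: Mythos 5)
Your proposal is correct and takes essentially the same route as the paper's proof: per downstream edge, the tracing probability is the convolution of the first-contact time (density $\beta e^{-\beta \tilde a}$) with the child's index-case flux $-\frac{d}{d\tilde a}\kappa_r^-(\tilde a)-\alpha\kappa_r^-(\tilde a)$, thinned by $p$; independence across the $K$ downstream subtrees gives the $k$-th power and hence the PGF $G$, and generation-independence closes the fixed point. The only differences are presentational: your competing-clocks coupling and first-event argument make explicit the independence that the paper uses silently when raising to the power $k$, and your direct convolution of the CDF of $\tau$ with the density of $R$ arrives at the single-integral form in one step, where the paper first writes a double integral of the detection rate and then integrates by parts.
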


\begin{proof}(of Theorem~\ref{backRecurs}) 
Let $\kappa_{r,k,i}^-(a)$ denote the probability of an individual in the $i$'th generation with $i>0$, that has $k$ downstream edges, to be infectious at age $a$ of infection. An individual recovers spontaneously at rate $\alpha$, and is directly diagnosed at rate $\sigma$. Hence, 
\begin{equation} \label{eq:5}
\kappa_{r,k,i}^ - (a) = e^{-(\alpha+\sigma)a}\,\,[1 - P({\text{tracing event in the interval [0,a]}})].
\end{equation}

The probability of no tracing event in the interval $[0, a]$ could for all we know at this point, depend on $k$ and $i$.

In order to determine $P({\text{tracing event in the interval [0,a]}})$, we first focus on a single downstream edge. The contactee at this downstream edge has contacts with the (infected) focal individual at rate $\beta$. The probability density that the first contact on this edge (after the focal individual became infected) took place at age $\tilde a$ of infection (of the focal individual)  is given by
$$ \beta \, e^{-\beta \tilde a}.$$
Note that the contactee becomes infected during this first contact. 
Furthermore, the hazard of a given downstream individual infected at age $\tilde a$ of the focal individual at age $b>\tilde a$ (again, of the focal individual)  reads
$$ \frac{-\frac d {d b}\kappa_{r,i+1}^-(b-\tilde a)}
        { \kappa_{r,i+1}^-(b-\tilde a)}.$$
However, we do not need the hazard, but the rate at which the individual is detected, no matter if directly or
per tracing. Therefore we subtract from the hazard the rate for unobserved recovery, 
\begin{eqnarray}\label{backTracinfgRate}
 \frac{-\frac d {d b}\kappa_{r,i+1}^-(b-\tilde a)}
{ \kappa_{r,i+1}^-(b-\tilde a)}-\alpha.
\end{eqnarray}
Hence, the {\it rate} for the focal individual at age $b$ (where $b<a$) to be traced via a given edge reads (in the following equation, 
$b$ is always some age of the focal individual)
\begin{eqnarray*}
&& p\,\int_0^b 
(\mbox{downstream indiv.\ becomes infected at }\tilde a)\,
\times (\mbox{downstream indiv.\ still infectious at age }b)\,\\\
&&\qquad\qquad\times (\mbox{detection rate of downstream indiv.\ at age }b)\, d\tilde a\\
&=&p\, \int_0^b\,  \bigg(\beta \, e^{-\beta \tilde a}\bigg)\, 
\kappa_{r,i+1}^-(b-\tilde a)\, 
\left( \frac{-\frac d {d a}\kappa_{r,i+1}^-(b-\tilde a)}
{ \kappa_{r,i+1}^-(b-\tilde a)}-\alpha\right)\, d\tilde a
=p\, \int_0^b\,  \beta \, e^{-\beta (b-\tilde a)}\, 
\bigg(-\frac d {d \tilde a}\kappa_{r,i+1}^-(\tilde a)\, 
-\alpha\kappa_{r,i+1}^-(\tilde a)\,\bigg)\, d\tilde a.
\end{eqnarray*}
We need to integrate over $b$ to obtain the probability for the focal individual that a tracing event took place via the given edge before age $a$. 
The probability to not be traced via a given edge reads
$$1-p\, \int_0^a\int_0^b\,  \beta \, e^{-\beta (b-\tilde a)}\, 
\bigg(-\frac d {d \tilde a}\kappa_{r,i+1}^-(\tilde a)\, 
-\alpha\kappa_{r,i+1}^-(\tilde a)\,\bigg)\, d\tilde a\, db.$$
As we have $k$ edges, 
$$\kappa_{r,k,i}(a) 
= \widehat\kappa(a)\, 
\left(1-p\, \int_0^a\int_0^b\,  \beta \, e^{-\beta (b-\tilde a)}\, 
\bigg(-\frac d {d \tilde a}\kappa_{r,i+1}^-(\tilde a)\, 
-\alpha\kappa_{r,i+1}^-(\tilde a)\,\bigg)\, d\tilde a\, db\right)^k.$$
Now comes an important ingredient: As we consider backward tracing, only downstream nodes can trigger tracing events. Therefore, the generation of an individual does not affect  the probability to be infections at a given age, 
$$\kappa_{r,i}^-(a)=\kappa_{r,j}^-(a)\qquad \mbox{for }i,j\geq 0.$$
That is, we may write $\kappa_{r,i}^-(a)=\kappa_r^-(a)$. Last, 
we remove the condition $K=k$ for the focal individual, 
\begin{eqnarray*}
\kappa_r^-(a) & = & 
\sum_{k=0}^\infty \kappa_{r,k,i}\,P(K=k) 
= 
 \widehat\kappa(a)\,\sum_{k=0}^\infty\, 
\left( 1-p\, \int_0^a\int_0^b\,  \beta \, e^{-\beta (b-\tilde a)}\, 
\bigg(-\frac d {d \tilde a}\kappa_{r,i+1}^-(\tilde a)\, 
-\alpha\kappa_{r,i+1}^-(\tilde a)\,\bigg)\, d\tilde a\, db\right)^k\,\,P(K=k) \\
&= &
\widehat\kappa(a)\, G\left( 1-p\, \int_0^a\int_0^b\,  \beta \, e^{-\beta (b-\tilde a)}\, 
\bigg(-\frac d {d \tilde a}\kappa_r^-(\tilde a)\, 
-\alpha\kappa_r^-(\tilde a)\,\bigg)\, d\tilde a\, db\right).
\end{eqnarray*}
This equation is basically our result. At the end of the proof, we only rewrite the integral, 
\begin{eqnarray*}
 \int_0^a\int_0^b\,  \beta \, e^{-\beta (b-\tilde a)}\, 
\bigg(-\frac d {d \tilde a}\kappa_r^-(\tilde a)\, 
-\alpha\kappa_r^-(\tilde a)\,\bigg)\, d\tilde a\, db
&=&  \int_0^a\left(\int_{\tilde a}^a\,  -1\,\frac d {db} e^{-\beta (b-\tilde a)}\, db\,\right)\,\,
\bigg(-\frac d {d \tilde a}\kappa_r^-(\tilde a)\, 
-\alpha\kappa_r^-(\tilde a)\,\bigg)\, d\tilde a\\
&=&  \int_0^a\bigg(1-e^{-\beta (a-\tilde a)}\, \bigg)
\bigg(-\frac d {d \tilde a}\kappa_r^-(\tilde a)\, 
-\alpha\kappa_r^-(\tilde a)\,\bigg)\, d\tilde a\,.
\end{eqnarray*}
\end{proof} 

In general, the integral equation (\ref{eq:2}) will not have an explicit solution. However, we can use Taylor expansion to obtain a first order approximation of $\kappa_r^-(a)$ in $p$. 

\begin{prop}Let $\beta  \ne \alpha  + \sigma $, and assume that all moments of $K$ are finite.  The first order approximation of $\kappa _r^ - (a)$ reads
\begin{equation} \label{eq:17}
\kappa _r^ - (a) = \hat \kappa (a)\left( {1 + \frac{{p\sigma E(K)}}{{\alpha  + \sigma  - \beta }}\left( {{e^{ - \beta a}} - \hat \kappa (a)} \right) - \frac{{p\sigma E(K)}}{{\alpha  + \sigma }}\left( {1 - \hat \kappa (a)} \right)} \right) + \mathcal{O} ({p^2}).
\end{equation}
\end{prop}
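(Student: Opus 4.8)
The plan is to treat the integral equation (\ref{eq:2}) as a fixed-point relation and expand its solution in powers of $p$. The natural starting point is the observation that at $p=0$ the argument of $G$ equals $1$, and since $G(1)=E(1^{K})=1$, the zeroth-order solution is simply $\kappa_r^-(a)=\widehat\kappa(a)=e^{-(\alpha+\sigma)a}$. I would therefore posit $\kappa_r^-(a)=\widehat\kappa(a)+p\,\kappa_{(1)}(a)+\mathcal{O}(p^2)$ and determine the correction $\kappa_{(1)}$.

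The key simplification is that the integral appearing inside $G$ in (\ref{eq:2}) is already multiplied by $p$. Consequently, to compute $\kappa_r^-$ up to order $p$ it suffices to insert the zeroth-order solution $\widehat\kappa$ into that integral; the first-order correction to $\kappa_r^-$ inside the integral contributes only at order $p^2$. Writing the argument of $G$ as $1-p\,I(a)$, I would first evaluate $I(a)$ with $\kappa_r^-$ replaced by $\widehat\kappa$. Using the elementary identity $-\frac{d}{d\tilde a}\widehat\kappa(\tilde a)-\alpha\widehat\kappa(\tilde a)=\sigma\,\widehat\kappa(\tilde a)$ --- which reflects that, without tracing, the instantaneous detection rate is exactly $\sigma$ --- the bracket in the integrand collapses to $\sigma e^{-(\alpha+\sigma)\tilde a}$, leaving the elementary integral
\[
I(a)=\sigma\int_0^a\big[1-e^{-\beta(a-\tilde a)}\big]\,e^{-(\alpha+\sigma)\tilde a}\,d\tilde a+\mathcal{O}(p).
\]
Splitting this into two exponential integrals and carrying out the integration (this is where the hypothesis $\beta\ne\alpha+\sigma$ enters, to avoid the resonant case that would produce a factor $a\,e^{-(\alpha+\sigma)a}$) yields
\[
I(a)=\sigma\left[\frac{1-\widehat\kappa(a)}{\alpha+\sigma}-\frac{e^{-\beta a}-\widehat\kappa(a)}{\alpha+\sigma-\beta}\right]+\mathcal{O}(p).
\]

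Next I would expand $G$ about $s=1$. Since all moments of $K$ are finite, $G$ is smooth (indeed real-analytic) in a neighbourhood of $s=1$ with $G'(1)=E(K)$, so $G(1-p\,I(a))=1-p\,E(K)\,I(a)+\mathcal{O}(p^2)$. Multiplying by $\widehat\kappa(a)$ and substituting the expression for $I(a)$ then reproduces (\ref{eq:17}) after distributing the factor $-p\sigma E(K)$ and grouping the two terms. The arithmetic is routine; the only care needed is to track the sign coming from $1/(\beta-\alpha-\sigma)=-1/(\alpha+\sigma-\beta)$.

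The step I expect to require the most justification is not the algebra but the legitimacy of the perturbation expansion itself: one must argue that (\ref{eq:2}) indeed admits a solution that is differentiable in $p$ at $p=0$, so that substituting the leading-order solution into the order-$p$ integral captures the full first-order term. I would secure this by viewing (\ref{eq:2}) as a fixed-point problem $\kappa_r^-=\mathcal{F}(\kappa_r^-,p)$ on a suitable space (e.g.\ continuous functions on a bounded age interval with the supremum norm), checking that $\mathcal{F}$ is a contraction for small $p$ uniformly on bounded intervals, and invoking the implicit-function theorem / smooth dependence of the fixed point on the parameter $p$; the finiteness of the moments of $K$ guarantees the required smoothness of $G$. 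Once smooth dependence on $p$ is in hand, the formal expansion above is rigorous and the remainder is genuinely $\mathcal{O}(p^2)$.
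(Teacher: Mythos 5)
Your proof is correct and takes essentially the same route as the paper: substitute the zeroth-order solution $\widehat\kappa$ into the order-$p$ integral (using $-\widehat\kappa'(\tilde a)-\alpha\widehat\kappa(\tilde a)=\sigma\widehat\kappa(\tilde a)$), evaluate the two exponential integrals, and expand $G(1-p\,I(a))=1-p\,E(K)\,I(a)+\mathcal{O}(p^2)$ via $G(1)=1$, $G'(1)=E(K)$, exactly as the paper does. Your closing paragraph on justifying the expansion rigorously (smooth dependence of the fixed point on $p$ via a contraction/implicit-function argument) is a sound supplement that the paper's formal computation omits, but it does not constitute a different approach.
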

\begin{proof}
First of all, $\kappa_r^-(a)|_{p=0}=\widehat\kappa(a)$. As we aim at a first order approximation, we are allowed to replace all terms of the form $p\,\kappa_r^-(a)$ by 
$p\, \widehat\kappa(a)=p\, e^{-(\alpha+\sigma)\,a}$. Therewith, eqn.\ (\ref{eq:2}) 
becomes
\begin{eqnarray*}
\kappa_r^-(a) 
&=& 
e^{-(\alpha+\sigma)a}\, 
G\left(\,\,
1- p\int_0^a\bigg[1-e^{-\beta (a-\tilde a)}\, \bigg]
\bigg[-\frac d {d \tilde a} e^{-(\alpha+\sigma)\,\tilde a}\, 
-\alpha e^{-(\alpha+\sigma)\,\tilde a}\,\bigg]\, d\tilde a\,\, \right)+{\cal O}(p^2)\\
&=& 
e^{-(\alpha+\sigma)a}\, 
G\left(\,\,
1- p\,\sigma\int_0^a\bigg[1-e^{-\beta (a-\tilde a)}\, \bigg]
 e^{-(\alpha+\sigma)\,\tilde a}\, d\tilde a\,\, \right)+{\cal O}(p^2)\\
&=& 
e^{-(\alpha+\sigma)a}\, 
G\left(\,\,
1- p\,\bigg[
\frac{\sigma}{\alpha+\sigma}(1-e^{-(\alpha+\sigma)a}) 
- 
\frac{\sigma}{\alpha+\sigma-\beta}\,e^{-\beta a}\,
       (1-e^{-(\alpha+\sigma-\beta)a}) 
\bigg]\,\, \right)+{\cal O}(p^2)\\
&=& 
e^{-(\alpha+\sigma)a}\, 
\bigg(G(1)-p\, G'(1)\, 
\,\bigg[
\frac{\sigma}{\alpha+\sigma}(1-\widehat\kappa(a)) 
- 
\frac{\sigma}{\alpha+\sigma-\beta}(e^{-\beta a}-\widehat\kappa(a)) 
\bigg]\,\,
\bigg)+{\cal O}(p^2).
\end{eqnarray*}
As $G(1)=1$ and $G'(1)=E(K)$, this equation establishes the result.
\end{proof}

As illustrated in Fig.~\ref{fig:BackwardRecursive}, the tracing term in this result reduces the probability to be infectious. If we compare the numerical solution of eqn.~(\ref{eq:2}) and the approximation given in eqn.~(\ref{eq:17}), we find for the parameters chosen ($p=0.3$), a good agreement with simulation results. 
Find some notes related to the simulation algorithm in the appendix.

\begin{figure}[h!]
	\centering
		\includegraphics[width=\textwidth]{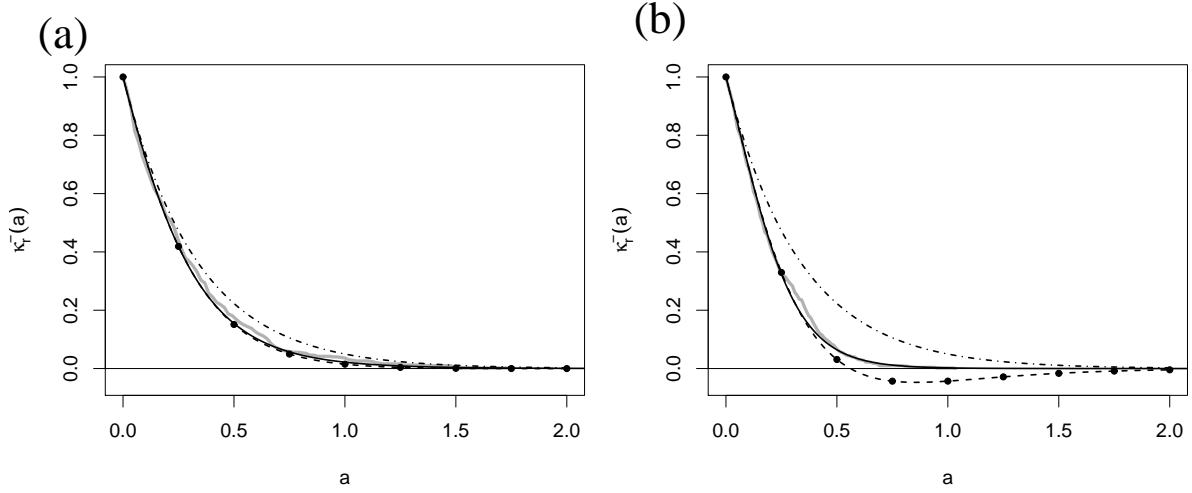}
	\caption{Recursive backward tracing. The probability to be infectious at age $a$ after infection ($\kappa_r^-(a)$) for $p=0.3$ (panel a), and $p=0.8$ (panel b).  Dashed-dotted  line: without tracing
	($\hat \kappa (a)=e^{-(\sigma+\alpha)a}$); solid line: numerical solution of $\kappa_r^-(a)$; dashed-lines with bullets: first order approximation of $\kappa_r^-(a)$; grey line: simulation. 
 Parameters: $\beta=1.5$, $\alpha = 0.1$, $\sigma=2.9$, $E(K)=4$, fixed degree.}
	\label{fig:BackwardRecursive}
\end{figure}

\subsubsection{One-step tracing}

The only difference between the recursive and the one-step method is that in the one-step method, the tracing is aborted after one step. Illustratively, this means that person $A$ can only be discovered through tracing events, which are triggered by neighbours of $A$ but not by tracing events from which neighbours of $A$ are discovered.

\begin{theorem}\label{backOneStep}
The probability that an infected individual under one step backward tracing reaches the age of infection a reads
\begin{equation} \label{eq:28}
\kappa _o^ - (a) = {e^{ - (\alpha  + \sigma )a}}\,G\left( {1 - p\int\limits_0^a {(1 - } {e^{ - \beta (a  - \widetilde a)}})\sigma \kappa _o^ - (\widetilde a)d\widetilde a} \right).
\end{equation}
\end{theorem}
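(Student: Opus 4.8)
The plan is to mirror the proof of Theorem~\ref{backRecurs} line by line, altering only the single ingredient that distinguishes one-step from recursive tracing. As before, I would set $\kappa_{o,k,i}^-(a) = \widehat\kappa(a)\,[1 - P(\text{tracing event in }[0,a])]$, where the prefactor $\widehat\kappa(a)=e^{-(\alpha+\sigma)a}$ accounts for the focal individual's own undetected recovery and direct diagnosis. I would then invoke the same observation as in the recursive case — only downstream nodes can trigger backward tracing, so the generation index $i$ is irrelevant and $\kappa_{o,i}^-(a) = \kappa_o^-(a)$ — and remove the condition $K=k$ at the end to produce the generating function $G$ exactly as before.

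The crucial modification concerns the rate at which a fixed downstream neighbour turns into a naming index case. In recursive tracing the neighbour names the focal individual whenever it is removed by detection of \emph{any} kind (direct diagnosis or being traced), which is why the recursive integrand carries the full detection density $-\frac{d}{d\tilde a}\kappa_r^-(\tilde a) - \alpha\kappa_r^-(\tilde a)$ (hazard minus the undetected-recovery rate $\alpha$). In one-step tracing, by contrast, a neighbour that is itself discovered by tracing does \emph{not} become an index case and hence cannot name the focal individual, since allowing it to do so would constitute a second tracing step. The only events that make the neighbour a naming index case are its \emph{direct} diagnoses, occurring at rate $\sigma$. Accordingly, the density that the neighbour (infected at focal-age $\tilde a$ with first-contact density $\beta e^{-\beta\tilde a}$) is still infectious and becomes a naming index case must be taken as $\sigma\,\kappa_o^-(\cdot)$; the survival factor $\kappa_o^-$ still accounts for the neighbour's own one-step backward tracing, which by generation-independence is the same function we are solving for, making the resulting equation implicit.

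With this single replacement, the rate for the focal individual at age $b$ to be traced through one given edge becomes $p\int_0^b \beta\,e^{-\beta(b-\tilde a)}\,\sigma\,\kappa_o^-(\tilde a)\,d\tilde a$ after the same change of variables used in the recursive proof. Integrating over $b\in[0,a]$ and applying Fubini together with $\int_{\tilde a}^a \beta e^{-\beta(b-\tilde a)}\,db = 1 - e^{-\beta(a-\tilde a)}$ collapses the double integral to $p\int_0^a (1 - e^{-\beta(a-\tilde a)})\,\sigma\,\kappa_o^-(\tilde a)\,d\tilde a$. This is the per-edge tracing probability; one minus it is the per-edge survival probability, and feeding it into $G$ — to account for the $k$ i.i.d.\ downstream edges and then average over $K$ — yields equation~(\ref{eq:28}).

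The step I expect to be the main obstacle is the conceptual one in the second paragraph: justifying rigorously that the neighbour's naming rate collapses from the full detection density $-\frac{d}{d\tilde a}\kappa_r^- - \alpha\kappa_r^-$ down to $\sigma\,\kappa_o^-$. Everything else is a verbatim repetition of the recursive computation, but this substitution is precisely where the one-step semantics enter, so it requires care to confirm both that (i) a traced neighbour is excluded from naming while still being removed from the infectious pool — hence $\kappa_o^-$, not $\widehat\kappa$, appears in the survival factor — and that (ii) the independence across distinct downstream edges granted by the tree structure is preserved, so the product form under $G$ remains valid.
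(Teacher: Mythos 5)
Your proposal is correct and follows exactly the route the paper takes: its proof of Theorem~\ref{backOneStep} is a one-line reduction to the argument of Theorem~\ref{backRecurs}, replacing the detection rate of eqn.~(\ref{backTracinfgRate}) by $\sigma$, which is precisely the substitution $-\frac{d}{d\tilde a}\kappa_r^-(\tilde a)-\alpha\kappa_r^-(\tilde a)\mapsto\sigma\,\kappa_o^-(\tilde a)$ you carry out in detail. Your expanded justification of why the neighbour's survival factor remains $\kappa_o^-$ (a traced neighbour is removed from the infectious pool but cannot name) while only direct diagnosis at rate $\sigma$ creates an index case is exactly the reasoning the paper leaves implicit, so there is no gap and no genuinely different method.
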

\begin{proof}
Following the same arguments in the recursive mode, the prove is similar to proposition~\ref{backRecurs}. The only difference is the detection rate, given in eqn.~(\ref{backTracinfgRate}). For the present case, this rate simply reads $\sigma$. 
\end{proof}

\begin{prop}
The first order approximation of $\kappa_o^{-}$ reads:
\begin{align}
\kappa _o^ - (a) = \hat \kappa (a)\left( {1 + \frac{{p\sigma E(K)}}{{\alpha  + \sigma  - \beta }}\left( {{e^{ - \beta a}} - \hat \kappa (a)} \right) - \frac{{p\sigma E(K)}}{{\alpha  + \sigma }}\left( {1 - \hat \kappa (a)} \right)} \right) + \mathcal{O} ({p^2}).\label{eq:32}
\end{align}\label{backOneStepApprox}
\end{prop}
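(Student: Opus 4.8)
The plan is to repeat, essentially verbatim, the first-order expansion carried out for the recursive case in the proof of~(\ref{eq:17}), but starting from the one-step equation~(\ref{eq:28}) of Theorem~\ref{backOneStep}. First I would use $\kappa_o^-(a)|_{p=0}=\widehat\kappa(a)$ together with the fact that the integral in~(\ref{eq:28}) already carries a prefactor $p$; hence I may replace $\kappa_o^-(\tilde a)$ inside the integral by $\widehat\kappa(\tilde a)=e^{-(\alpha+\sigma)\tilde a}$, incurring only an $\mathcal{O}(p^2)$ error. The argument of $G$ then becomes
\[
1-p\,\sigma\int_0^a\big(1-e^{-\beta(a-\tilde a)}\big)\,e^{-(\alpha+\sigma)\tilde a}\,d\tilde a+\mathcal{O}(p^2).
\]

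The \emph{key observation}, and the reason the one-step approximation coincides exactly with the recursive one, is that this integrand is identical to the one obtained in the recursive expansion. The only structural difference between~(\ref{eq:2}) and~(\ref{eq:28}) is the detection-rate factor: the recursive equation carries $-\frac{d}{d\tilde a}\kappa_r^-(\tilde a)-\alpha\kappa_r^-(\tilde a)$, while the one-step equation carries $\sigma\kappa_o^-(\tilde a)$. Evaluated at $p=0$, i.e.\ with $\kappa$ replaced by $\widehat\kappa$, the recursive factor collapses to $-\frac{d}{d\tilde a}\widehat\kappa(\tilde a)-\alpha\widehat\kappa(\tilde a)=\sigma\widehat\kappa(\tilde a)$, which is precisely the one-step factor. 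Thus both models share the same linearization in $p$, and I may simply quote the integral already evaluated in the recursive proof,
\[
\int_0^a\big(1-e^{-\beta(a-\tilde a)}\big)\,\sigma\,e^{-(\alpha+\sigma)\tilde a}\,d\tilde a=\frac{\sigma}{\alpha+\sigma}\big(1-\widehat\kappa(a)\big)-\frac{\sigma}{\alpha+\sigma-\beta}\big(e^{-\beta a}-\widehat\kappa(a)\big).
\]

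Finally I would Taylor-expand $G$ about $1$, using $G(1-p\,x)=G(1)-p\,G'(1)\,x+\mathcal{O}(p^2)$ with $G(1)=1$ and $G'(1)=E(K)$, and multiply by the prefactor $e^{-(\alpha+\sigma)a}=\widehat\kappa(a)$; collecting terms then reproduces~(\ref{eq:32}), which is word-for-word identical to~(\ref{eq:17}). I expect no genuine obstacle: once the linearization is set up the computation is routine, and indeed it is literally the recursive one. The only point requiring care, which is really the content of the proposition, is the verification that the zeroth-order detection rates of the recursive and one-step models agree; everything downstream is then an exact copy of the recursive calculation.
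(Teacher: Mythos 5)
Your proposal is correct and follows exactly the route the paper intends, which simply says the proof ``follows as Proposition 2.2'': linearize the one-step equation~(\ref{eq:28}) in $p$, observe that at zeroth order the recursive detection rate $-\widehat\kappa'(\tilde a)-\alpha\widehat\kappa(\tilde a)=\sigma\widehat\kappa(\tilde a)$ collapses to the one-step rate, and reuse the recursive computation verbatim. Your explicit verification of this collapse is precisely the point the paper makes informally in the remark following the proposition, so there is nothing to add.
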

\begin{proof}
The proof follows as proposition 2.2. Hence, we can proceed to determine the approximation exactly as with the recursive method.
\end{proof}
Note that the first approximation for one-step tracing coincides with that for recursive tracing. The reason is that tracing one edge has probability ${\cal O}(p)$, such that tracing a path of length 2 already has a higher order term ${\cal O}(p^2)$. Again, 
the theoretical and simulation results fit well as shown in Fig.~\ref{fig:BackwardOneStep}.

\begin{figure}[h!]
	\centering
	\includegraphics[width=\textwidth]{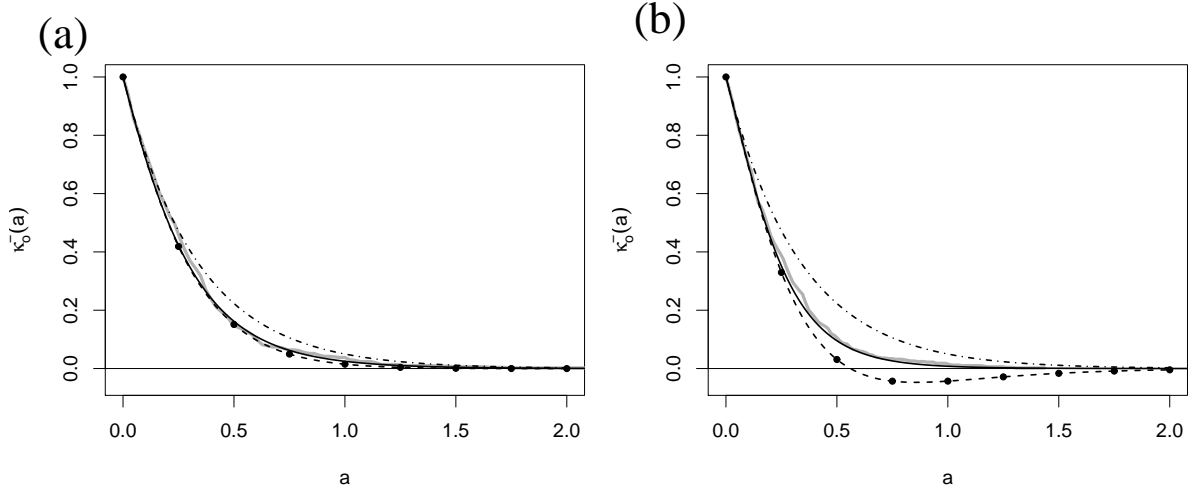}
	\caption{One-step backward tracing. The probability to be infectious at age $a$ after infection ($\kappa_o^-(a)$) for $p=0.3$ (panel a), and $p=0.8$ (panel b).  Dashed-dotted  line: without tracing
	($\hat \kappa (a)=e^{-(\sigma+\alpha)a}$); solid line: numerical solution of $\kappa_o^-(a)$; dashed-lines with bullets: first order approximation of $\kappa_o^-(a)$; grey line: simulation.
		Parameters: $\beta=1.5$, $\alpha = 0.1$, $\sigma=2.9$, $E(K)=4$, fixed degree.}
	\label{fig:BackwardOneStep}
\end{figure}

\subsubsection{Connection to the results for homogeneously mixing populations}
Interestingly enough, we can relate the result for random trees to the corresponding results on randomly mixing population, derived in~\citep{muller2000contact,muller2016effect}. In a randomly mixing population, the contact rate is not defined per edge ($\beta$), but for an individual  ($\beta_{ind}$). The model and the meaning of all other parameters parallel that of the present work, with the exception that the contact graph is not a tree but a complete graph. The corresponding probability $\kappa_{h}^-(a)$ to be infectious at age $a$ of infection for an homogeneous population (considered in the onset of an infection, and one step backward tracing only) is given by the equation 
(Proposition 2.1 and Proposition 2.2 in~\citet{muller2016effect})
\begin{eqnarray} \label{BackHomPopExactModel}
\frac{d}{{da}}\kappa_h^- (a) =  - \kappa _h^ - (a)\left[ {\alpha  + \sigma  + p\,\sigma\, \beta_{ind} \int\limits_0^a {\kappa_h^ - (\tilde a)} d\tilde a} \right],\qquad \kappa_h^-(0)=1,
\end{eqnarray}
while the first order approximation reads
\begin{eqnarray}\label{BackHomPopExactFirstOrder}
{\kappa _h^-}(a) ={}& \hat \kappa (a) - p\frac{\sigma }{{\alpha  + \sigma }}\beta_{ind} \hat \kappa (a)\left( {a - \frac{{1 - \hat \kappa (a)}}{{\alpha  + \sigma }}} \right) + \mathcal{O} ({p^2}).
\end{eqnarray}
One central difference between this present equations and the homogeneous models is the appearance of $\beta_{ind}$. 
If we select randomly a non-root node in the present model, then the total contact rate of that node is the sum of all contacts on its edges. For all individuals apart of the root, the relation between $\beta$ and $\beta_{ind}$ is given by
\begin{eqnarray}
\beta_{ind} = (1+E(K))\, \beta.
\end{eqnarray}
Here we obtain $1+E(K)$, as contacts happen on upstream and on downstream nodes. It is near at hand to consider the limit $E(K)\rightarrow\infty$, $\beta\rightarrow0$, while $\beta_{ind} = (1+E(K))\, \beta$ is constant to approximate a randomly mixing population (full graph). We first discuss the first order approximation.

\begin{prop}\label{approxBackwardHomo1} We find
\begin{equation} \label{eq:35}
\kappa_h^{-}(a) = \lim_{\beta\rightarrow 0,\,E(K)\rightarrow\infty}\kappa_o^-(a) + \mathcal{O}({p^2}),
\end{equation}
under the condition that $\beta_{ind} : = \mathop {\lim }\limits_{\beta  \to 0,E(K) \to \infty }  \beta (1+E(K)).$
\end{prop}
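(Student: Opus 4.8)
The plan is to work entirely at first order in $p$, exploiting the two explicit first-order approximations already at our disposal: formula \eqref{eq:32} for $\kappa_o^-(a)$ and formula \eqref{BackHomPopExactFirstOrder} for $\kappa_h^-(a)$. Since each of these is exact only up to $\mathcal{O}(p^2)$, and the statement itself carries an $\mathcal{O}(p^2)$ remainder, it suffices to show that the $\mathcal{O}(p)$ coefficient of $\lim_{\beta\to 0,\,E(K)\to\infty}\kappa_o^-(a)$ coincides with the $\mathcal{O}(p)$ coefficient of $\kappa_h^-(a)$; any higher-order discrepancy is absorbed by the $\mathcal{O}(p^2)$ term. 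So the whole proposition reduces to computing one limit of an explicitly known expression.

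Writing $\mu=\alpha+\sigma$ so that $\hat\kappa(a)=e^{-\mu a}$, the $\mathcal{O}(p)$ part of \eqref{eq:32} is
\[
p\,\sigma\,E(K)\,\hat\kappa(a)\,B(\beta),\qquad
B(\beta):=\frac{e^{-\beta a}-\hat\kappa(a)}{\mu-\beta}-\frac{1-\hat\kappa(a)}{\mu}.
\]
The key observation is that $B(\beta)\to 0$ as $\beta\to 0$ (both terms tend to $(1-\hat\kappa(a))/\mu$), while the prefactor $E(K)\to\infty$. The limit is therefore an indeterminate $\infty\cdot 0$ form, which I would resolve by extracting the leading order of $B(\beta)$ in $\beta$. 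Using $e^{-\beta a}=1-\beta a+\mathcal{O}(\beta^2)$ and $(\mu-\beta)^{-1}=\mu^{-1}(1+\beta/\mu)+\mathcal{O}(\beta^2)$, a short computation of $B'(0)$ gives
\[
B(\beta)=-\frac{\beta}{\mu}\left(a-\frac{1-\hat\kappa(a)}{\mu}\right)+\mathcal{O}(\beta^2).
\]

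Next I would take the limit. The constraint $\beta(1+E(K))\to\beta_{ind}$ together with $\beta\to 0$ forces $\beta\,E(K)\to\beta_{ind}$, so multiplying $B(\beta)$ by $E(K)$ yields
\[
E(K)\,B(\beta)\longrightarrow -\frac{\beta_{ind}}{\mu}\left(a-\frac{1-\hat\kappa(a)}{\mu}\right),
\]
the $\mathcal{O}(\beta^2)$ term dying because $\beta^2 E(K)\sim\beta\,\beta_{ind}\to 0$. Substituting back and recalling $\mu=\alpha+\sigma$ reproduces exactly the $\mathcal{O}(p)$ term $-p\frac{\sigma}{\alpha+\sigma}\beta_{ind}\hat\kappa(a)\left(a-(1-\hat\kappa(a))/(\alpha+\sigma)\right)$ of \eqref{BackHomPopExactFirstOrder}, while the zeroth-order term $\hat\kappa(a)$ is manifestly limit-invariant. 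This matches $\kappa_h^-(a)$ up to $\mathcal{O}(p^2)$ and closes the argument.

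The main obstacle is the careful handling of this $\infty\cdot 0$ limit: one must verify that the sub-leading $\mathcal{O}(\beta^2)$ contribution to $B(\beta)$ genuinely vanishes after multiplication by the diverging factor $E(K)\sim\beta_{ind}/\beta$, i.e.\ that no hidden term of order $\beta^2 E(K)$ survives, which is equivalent to the legitimacy of commuting the $\beta\to 0,\,E(K)\to\infty$ limit with the first-order-in-$p$ expansion. The hypothesis $\beta\neq\alpha+\sigma$ underlying \eqref{eq:32} is harmless here, since we send $\beta\to 0$ and the denominator $\mu-\beta$ stays bounded away from zero throughout.
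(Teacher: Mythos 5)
Your proposal is correct and follows essentially the same route as the paper: both reduce the claim to taking the limit of the first-order formula \eqref{eq:32}, recognize the $\infty\cdot 0$ structure in the $\mathcal{O}(p)$ coefficient, and resolve it by extracting the linear-in-$\beta$ term of the bracket --- your Taylor expansion $B(\beta)=B'(0)\beta+\mathcal{O}(\beta^2)$ is precisely the paper's evaluation of the difference quotient as $\frac{d}{d\beta}\bigl(\frac{\alpha+\sigma}{\alpha+\sigma-\beta}(e^{-\beta a}-\hat\kappa(a))\bigr)\big|_{\beta=0}$, with $\beta E(K)\to\beta_{ind}$ killing the remainder in both cases. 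Your explicit remark that $\beta^2 E(K)\to 0$ and that $\beta$ stays away from $\alpha+\sigma$ is slightly more careful than the paper's computation, but it is the same argument.
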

\begin{proof}
First note, $\beta_{ind} : = \mathop {\lim }\limits_{\beta  \to 0,E(K) \to \infty } \beta E(K)$. We take the limit $\beta\rightarrow 0$, $E(K)\rightarrow\infty$ of eqn.~\ref{eq:32}, 
\begin{equation} \label{eq:36}
\kappa _o^ - (a) = \hat \kappa (a)\left( {1 + \frac{{p\sigma E(K)}}{{\alpha  + \sigma  - \beta }}\left( {{e^{ - \beta a}} - \hat \kappa (a)} \right) - \frac{{p\sigma E(K)}}{{\alpha  + \sigma }}\left( {1 - \hat \kappa (a)} \right)} \right) + \mathcal{O} ({p^2}).
\end{equation}
%
We note that
\begin{align*}
{}& {\left. {\mathop {\lim }\limits_{\beta  \to 0} \,\,p\sigma \beta\,E(K)\,\,
\,\, \frac 1 \beta\,\,\bigg\{\frac{1}{{\alpha  + \sigma  - \beta }}\left( {{e^{ - \beta a}} - {e^{ - (\alpha  + \sigma )a}}} \right) - \frac{1}{{\alpha  + \sigma }}\left( {1 - {e^{ - (\alpha  + \sigma )a}}} \right)} \bigg\}\right|_{\beta E(K)\rightarrow\beta_{ind}}},\\
={}& {\left. {\frac{{p\sigma \beta_{ind} }}{{\alpha  + \sigma }}\frac{d}{{d\beta }}\left( {\frac{{\alpha  + \sigma }}{{\alpha  + \sigma  - \beta }}\left( {{e^{ - \beta a}} - {e^{ - (\alpha  + \sigma )a}}} \right)} \right)} \right|_{\beta  = 0}}
= \frac{{p\sigma \beta_{ind} }}{{\alpha  + \sigma }}\left( {\frac{1}{{\alpha  + \sigma }}\left( {1 - {e^{ - (\alpha  + \sigma )a}}} \right) - a} \right).
\end{align*}
Hence, the limit reads
$$
\kappa _o^ - (a) = {e^{ - (\alpha  + \sigma )a}} - \frac{{p\sigma \beta_{ind} }}{{\alpha  + \sigma }}{e^{ - (\alpha  + \sigma )a}}\left( {a - \frac{1}{{\alpha  + \sigma }}\left( {1 - {e^{ - (\alpha  + \sigma )a}}} \right)} \right) + {\cal O}({p^2}).
$$
This equation establishes the result. 
\end{proof}

We have shown above that under the limit for any (well behaved) degree distribution, the first order approximation of our tree model and the random mixing model of \citep{muller2016effect} agree. Now we show an even stronger result. Let us assume that $K$ follows the Poisson process $K \sim Pois\left( {E(K)} \right)$. The random mixing models assumes that contacts happen at randomly distributed times (homogeneous Poisson process) and the number of contactees over a given time span follows a Poisson distribution. Thus, it is natural to examine the tree model with a Poissonian degree distribution. 
\begin{prop}\label{approxBackwardHomo2}
Let $K\sim\mbox{Pois}(E(K))$, and define $\beta_{ind} = (1+E(K))\, \beta$. Eqn.~(\ref{eq:28}) becomes in the limit 
$E(K)\rightarrow\infty$, $\beta\rightarrow0$, while $\beta_ {ind}$ is kept constant, 
\begin{equation}\label{backOneExactHomo}
\frac{d}{{da}}\kappa_o^- a) =  - \kappa _o^ - (a)\left[ {\alpha  + \sigma  + p\sigma \beta_{ind} \int\limits_0^a {\kappa _o^ - (\tilde a)} \, d\tilde a} \right].
\end{equation}
\end{prop}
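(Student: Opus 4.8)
The plan is to exploit the special structure of the probability generating function of a Poisson law, namely $G(s) = E(s^K) = e^{E(K)(s-1)}$, which collapses the product-over-edges structure implicit in eqn.~(\ref{eq:28}) into a single exponential. First I would substitute this $G$ into the one-step backward tracing equation. Writing $x(a) = p\int_0^a(1-e^{-\beta(a-\tilde a)})\sigma\kappa_o^-(\tilde a)\,d\tilde a$ for the argument of $G$, the identity $G(1-x) = e^{-E(K)\,x}$ turns eqn.~(\ref{eq:28}) into
$$\kappa_o^-(a) = e^{-(\alpha+\sigma)a}\exp\left(-E(K)\,p\sigma\int_0^a(1-e^{-\beta(a-\tilde a)})\kappa_o^-(\tilde a)\,d\tilde a\right).$$

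Next I would carry out the limit $\beta\to0$, $E(K)\to\infty$ with $\beta_{ind}=(1+E(K))\beta$ held fixed. The crucial observation is the expansion $1-e^{-\beta(a-\tilde a)} = \beta(a-\tilde a)+\mathcal{O}(\beta^2)$, so that the coefficient $E(K)\,(1-e^{-\beta(a-\tilde a)})$ multiplying the integrand tends to $\beta_{ind}(a-\tilde a)$: indeed $E(K)\beta = \beta_{ind} - \beta \to \beta_{ind}$, while the $\mathcal{O}(\beta^2)$ correction is annihilated because $E(K)\,\mathcal{O}(\beta^2)=\mathcal{O}(\beta_{ind}\beta)\to0$. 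In the limit the exponentiated integral equation becomes
$$\kappa_o^-(a) = e^{-(\alpha+\sigma)a}\exp\left(-p\sigma\beta_{ind}\int_0^a(a-\tilde a)\kappa_o^-(\tilde a)\,d\tilde a\right).$$

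Finally I would pass from this integral form to the claimed ODE by taking logarithms and differentiating in $a$. The only computational step needing care is differentiating $\int_0^a(a-\tilde a)\kappa_o^-(\tilde a)\,d\tilde a$ via the Leibniz rule: the boundary contribution vanishes because the kernel $(a-\tilde a)$ is zero at $\tilde a = a$, leaving simply $\int_0^a\kappa_o^-(\tilde a)\,d\tilde a$. Dividing through by $\kappa_o^-(a)$ and rearranging then reproduces eqn.~(\ref{backOneExactHomo}) exactly.

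The main obstacle is not the algebra but justifying that the limit may be interchanged with the integral and that $\kappa_o^-$ indeed converges to a limit function solving the nonlinear integral equation. Since this is the same flavour of convergence argument already invoked in Proposition~\ref{approxBackwardHomo1}, and since the integrand is uniformly bounded on compact $a$-intervals (with $\kappa_o^-\le 1$ and $E(K)(1-e^{-\beta(a-\tilde a)})$ uniformly controlled), I expect a dominated-convergence or fixed-point argument to close the gap; in keeping with the heuristic level of the surrounding results, however, the paper may well treat this passage to the limit formally.
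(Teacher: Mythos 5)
Your proposal is correct and follows essentially the same route as the paper: the paper likewise substitutes the Poisson PGF $G(s)=e^{-(1-s)E(K)}$ into eqn.~(\ref{eq:28}) to obtain the identical exponential integral equation, the only (cosmetic) difference being the order of operations — the paper first differentiates at finite parameters, getting $\frac{d}{da}\kappa_o^-(a)=-\kappa_o^-(a)\left[\alpha+\sigma+E(K)p\sigma\beta\int_0^a e^{-\beta(a-\tilde a)}\kappa_o^-(\tilde a)\,d\tilde a\right]$, and then sends $\beta\to0$, $E(K)\to\infty$ in the ODE, whereas you take the limit inside the integral equation (via $E(K)(1-e^{-\beta(a-\tilde a)})\to\beta_{ind}(a-\tilde a)$) and differentiate afterwards. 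Both versions are formal at exactly the same point, namely the interchange of the limit with the $\beta$-dependent solution $\kappa_o^-$, which the paper, as you anticipated, does not address rigorously either.
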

\begin{proof}
The PGF of a Poissonian distributed random variable reads $G(s)=E({s^K})  = {e^{ - (1 - s)E(K)}}$. With Theorem~\ref{backOneStep} we conclude
\begin{eqnarray*}
\kappa_o^ - (a) = {e^{ - (\alpha  + \sigma )a}}\exp\left( {\,-E(K) \,p\,\sigma\,\int\limits_0^a {(1 - } {e^{ - \beta (a  - \tilde a)}})\,\kappa_o^ - (\tilde a)\,d\tilde a}\,\right).
\end{eqnarray*} 
The derivative w.r.t.\ $a$ gives
\begin{align*}
\frac{d}{{da}}\kappa_o^- (a) =  - \kappa_o^-(a)\left[ {\alpha  + \sigma  + E(K)p\sigma \beta \int\limits_0^a {{e^{ - \beta (a - \tilde a)}}\kappa _o^ - (\tilde a)}\, d\tilde a} \right].
\end{align*}
If we again take  the limit $\beta  \to 0$ while $E(K) \to \infty$, 
while for $\beta (1+E(K))\rightarrow\beta_{ind}$, we obtain eqn.~(\ref{backOneExactHomo}).
\end{proof}
Note that eqn.~(\ref{backOneExactHomo}) is identical with eqn.~(\ref{BackHomPopExactModel}). 
Not only both models agree at the first order approximation in $p$, but also that, in this limit (and that if $K$ follows a Poisson distribution), the models themselves coincide.

\subsection{Forward tracing}
In forward tracing, we note that an infected individual, if not from the zeroth generation, can only be traced through his/her infector. Therefore, in forward tracing, the generation matters. On the other hand, each person only has exactly one infector which is also the case in randomly mixing populations. The line of reasoning is identical with that of homogeneous models~\citep{muller2000contact,muller2016effect}. In particular, the degree distribution does not appear in the theory we develop next. 

\subsubsection{Recursive method}

\begin{defi}
Let $\kappa_{r,i}^+(a|b)$ be the probability that an individual of generation $i$ is still infected at age $a$ of infection given that the infector has age $a + b$ of infection.
\end{defi}

\begin{theorem}\label{forwardTracingRecursTheo}
We have $\kappa_{r,0}^+(a)=\widehat\kappa(a)$, and $\kappa_{r,i}^+(a)$ follows from the recursive equation
\begin{eqnarray}
\kappa _{r,i}^ + (a|b)\kappa _{r,i - 1}^ + (b) &=& \hat \kappa (a)\left\{ {\kappa _{r,i - 1}^ + (b) - p\int\limits_0^a {\left( { - \kappa {{_{r,i - 1}^ + }^\prime }(b + c) - \alpha \kappa _{r,i - 1}^ + (b + c)} \right)dc} } \right\},\label{forwardRekEqna}\\
\kappa_{r,i}^+ &=& \frac{\int_0^\infty\,\kappa_{r,i}^+(a|b)\, \kappa_{r,i-1}^+(b)\, db}{\int_0^\infty \kappa_{r,i-1}^+(\tau)\, d\tau}.
\label{forwardRekEqnb}
\end{eqnarray}
\end{theorem}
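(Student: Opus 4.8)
The plan is to mirror the structure of the backward-tracing argument in Theorem~\ref{backRecurs}, with the crucial change that the $k$ downstream edges are replaced by the \emph{single} infector, which is the only possible source of a forward-tracing event. I would first dispose of the base case: the root (generation $0$) has no infector, so no forward-tracing event can occur and the root is removed only through spontaneous recovery (rate $\alpha$) or direct diagnosis (rate $\sigma$); hence $\kappa_{r,0}^+(a)=\widehat\kappa(a)$. For the recursive step, expressing $\kappa_{r,i}^+$ through $\kappa_{r,i-1}^+$, I would condition on the age $b$ of the infector at the instant of transmission, which is exactly the information carried by $\kappa_{r,i}^+(a|b)$: when the focal individual has age $c$, its infector has age $b+c$.

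As in the backward case I would write the survival probability as a product of two independent competing risks, $\kappa_{r,i}^+(a|b)=\widehat\kappa(a)\,[\,1-P(\text{forward-tracing event in }[0,a]\mid b)\,]$, where $\widehat\kappa(a)$ accounts for the focal individual's own spontaneous recovery and direct diagnosis. A forward-tracing event requires the infector to be detected (diagnosed, or itself traced) and then to name the focal with probability $p$. Reusing the detection-rate bookkeeping of eqn.~(\ref{backTracinfgRate}), the rate at which the infector is detected at its age $b+c$ is the hazard of $\kappa_{r,i-1}^+$ minus the unobserved-recovery rate $\alpha$; since the conditioning fixes the infector as infectious at age $b$, I would weight this rate by the conditional survival $\kappa_{r,i-1}^+(b+c)/\kappa_{r,i-1}^+(b)$. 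Integrating $p$ times this quantity over $c\in[0,a]$ produces $P(\text{forward-tracing event}\mid b)$, and multiplying the resulting identity through by $\kappa_{r,i-1}^+(b)$ clears the denominator and yields exactly eqn.~(\ref{forwardRekEqna}); the theorem is phrased in this multiplied-out form precisely to avoid dividing by $\kappa_{r,i-1}^+(b)$, which may vanish.

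It then remains to remove the conditioning on $b$ by averaging $\kappa_{r,i}^+(a|b)$ against the distribution of the infector's age at transmission. Here I would argue that, because contacts on the infector--focal edge occur at the constant rate $\beta$ while the infector is infectious, the transmission age $b$ has density proportional to the time the infector spends infectious at that age, i.e.\ proportional to $\kappa_{r,i-1}^+(b)$; the constant $\beta$ cancels under the normalisation by $\int_0^\infty \kappa_{r,i-1}^+(\tau)\,d\tau$, giving eqn.~(\ref{forwardRekEqnb}).

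I expect the distribution of $b$ to be the delicate point. One must be careful about what ``a randomly chosen infected individual of generation $i$'' means once it is traced back through its infector, which is a length-/size-biasing question. A scrupulous treatment based on the \emph{first} infecting contact on the edge would a priori attach an extra factor $e^{-\beta b}$, the probability of no earlier contact, so the hard part is to justify that the relevant weight is simply $\kappa_{r,i-1}^+(b)$ (equivalently, that the transmission age is the age at a uniformly random point of the infector's infectious career); note that the two prescriptions agree to leading order in the limit $\beta\to0$ that is later used to recover the homogeneously mixing results. A secondary point to verify is the independence underlying the competing-risks factorization, namely that the focal individual's own removal clocks are independent of the detection time of its infector, so that the product form for $\kappa_{r,i}^+(a|b)$ holds exactly.
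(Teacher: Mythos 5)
Your proposal is correct and follows essentially the same route as the paper's own proof: the same competing-risks factorization $\kappa_{r,i}^+(a|b)=\widehat\kappa(a)\bigl[1-p\int_0^a(\cdots)\,dc\bigr]$ with the infector's detection rate $\frac{\kappa_{r,i-1}^+(b+c)}{\kappa_{r,i-1}^+(b)}\bigl(\frac{-{\kappa_{r,i-1}^+}'(b+c)}{\kappa_{r,i-1}^+(b+c)}-\alpha\bigr)$, multiplied out to avoid the denominator, followed by averaging over the transmission age $b$ with weight proportional to $\beta\,\kappa_{r,i-1}^+(b)$, the $\beta$ cancelling under normalisation. The ``delicate point'' you flag is real, but it is precisely the point the paper also glosses over: its proof simply declares the infection rate for the focal node to be $\beta\,\kappa_{r,i-1}^+(b)$, with only a parenthetical remark that --- unlike in Theorem~\ref{backRecurs} --- no first-contact conditioning (your factor $e^{-\beta b}$) is imposed, and offers no further justification, so your candid identification of this as the weak step matches, and indeed scrutinises more honestly than, the published argument.
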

The proof of this theorem 2.8 is in wide parts identical with the analogue proof in \citep{muller2000contact,muller2016effect}; however, in order to keep the present paper self-contained, we briefly sketch the argument.
\begin{proof}
We first elaborate $\kappa_i^+(a|b)$, that is, the probability of a target individual in generation $i$ to be still infectious at age of infection $a$ if the infector did have age of infection $b$ at the time when the infection event did take place. The probability to be infectious at age of infection $a$, $\widehat\kappa(a)$, is decreased by contact tracing. We obtain the probability that no tracing event did take place. If the target individual has age $a$, the infector has age $a+b$. The probability that the infector is still infectious reads
$$
\frac{{\kappa _{r,i - 1}^ + (a + b)}}{{\kappa _{r,i - 1}^ + (a)}}
$$
The rate at which the infector is observed at age $b+c$ with $c \in [0,a)$ is given by
$$
\frac{\kappa_{r,i-1}^+(b + c)}
     {\kappa_{r,i-1}^+ (b)}\,\,
    \left( \frac{-{\kappa_{r,i-1}^+}'(b + c)}
    	        {\kappa_{r,i-1}^+(b + c)} - \alpha 
     \right) 
=
\left( 
    \frac{-{\kappa_{r,i-1}^+}'(b + c)}
         {\kappa_{r,i-1}^+(b)} 
  - \frac{\alpha \kappa_{r,i-1}^+(b + c)}
         {\kappa_{r,i-1}^+(b)} 
\right).
$$
Therefore, the probability that the infector induced a tracing event in $[0,a)$ sums up to
$$
\int\limits_0^a {\left( {\frac{{ - \kappa {{_{r,i - 1}^ + }^\prime }(b + c)}}{{\kappa _{r,i - 1}^ + (b)}} - \frac{{\alpha \kappa _{r,i - 1}^ + (b + c)}}{{\kappa _{r,i - 1}^ + (b)}}} \right)dc}.
$$
Therefore,
$$
\kappa _{r,i}^ + (a|b) = \hat \kappa (a)\left\{ {1 - p\int\limits_0^a {\left( {\frac{{ - \kappa {{_{r,i - 1}^ + }^\prime }(b + c)}}{{\kappa _{r,i - 1}^ + (b)}} - \frac{{\alpha \kappa _{r,i - 1}^ + (b + c)}}{{\kappa _{r,i - 1}^ + (b)}}} \right)dc} } \right\},
$$
or, equivalently, 
$$
\kappa _{r,i}^ + (a|b)\kappa _{r,i - 1}^ + (b) = \hat \kappa (a)\left\{ {\kappa _{r,i - 1}^ + (b) - p\int\limits_0^a {\left( { - \kappa {{_{r,i - 1}^ + }^\prime }(b + c) - \alpha \kappa _{r,i - 1}^ + (b + c)} \right)dc} } \right\}.
$$
We find $\kappa_{r,i}^+(a)$ in integrating $\kappa_{r,i}^+(a|b)$ by the probability density of a generation $i$-infected to produce the secondary case at age of infection $b$. Thereto, we focus on the edge connecting the focal individual (generation $i$) with his/her upstream node (generation $i-1$). If the upstream node becomes infected, the focal node is uninfected. The infection rate for the focal node (note: we do not condition on the fact that the upstream node is a certain time interval infected, as we did in Theorem~\ref{backRecurs}) is 
$ \beta\,\kappa_{r,i-1}^+(b)$. 
Hence, the age of infection at the time of the infection event is given by the probability distribution
$\beta\kappa_{r,i-1}^+(b)/\int_0^\infty\beta\kappa_{r,i-1}^+(\tau)\, d\tau$.
This consideration yields eqn.~(\ref{forwardRekEqnb}), and completes the proof.
\end{proof}

\begin{rem}
We may rewrite the integral for $\kappa_{r,i}^+(a)$ as follows:
\begin{eqnarray*}
\kappa _{r,i}^ + (a) &=& \frac{{\int\limits_0^\infty  {\kappa _{r,i}^ + (a|b)\kappa _{r,i - 1}^ + (b)db} }}{{\int\limits_0^\infty  {\kappa _{r,i - 1}^ + (\tau)d\tau} }}
= \frac{{\int\limits_0^\infty  {\hat \kappa (a)\left( {\kappa _{r,i - 1}^ + (b) - p\int\limits_0^a {\left( { - \kappa {{_{r,i - 1}^ + }^\prime }(b + c) - \alpha \kappa _{r,i - 1}^ + (b + c)} \right)dc} } \right)} db}}{{\int\limits_0^\infty  {\kappa _{r,i - 1}^ + (\tau)d\tau} }}\\
&=&\widehat \kappa (a)\left( {1 - p\frac{{\int\limits_0^\infty  {\int\limits_0^a {\left( { - \kappa {{_{r,i - 1}^ + }^\prime }(b + c) - \alpha \kappa _{r,i - 1}^ + (b + c)} \right)dcdb} } }}{{\int\limits_0^\infty  {\kappa _{r,i - 1}^ + (\tau)d\tau} }}} \right).
\end{eqnarray*}
Since
$$
\int_0^\infty \int_0^a f(b+c)\,dcdb 
=
\int_0^\infty \int_b^{a+b} f(c)\,dcdb 
= 
\int_0^a \int_0^c f(c)\,dbdc 
+
\int_a^\infty \int_{c-a}^{c} f(c)\,dbdc 
= \int_0^\infty \min\{a,c\}\, f(c)\, dc
$$
we find 
\begin{equation} \label{eq:37}
\kappa _{r,i}^ + (a) = \hat \kappa (a)\left( {1 - p\frac{{\int\limits_0^\infty  \min\{a,b\}\,\bigg(-\kappa {_{r,i - 1}^{+ \prime} }(b) - \alpha \kappa _{r,i - 1}^+ (b)\bigg)\, db}}{{\int\limits_0^\infty  {\kappa _{r,i - 1}^ + (\tau)d\tau} }}} \right).
\end{equation}
\end{rem}

\begin{figure}[h!]
	\centering
	\includegraphics[width=\textwidth]{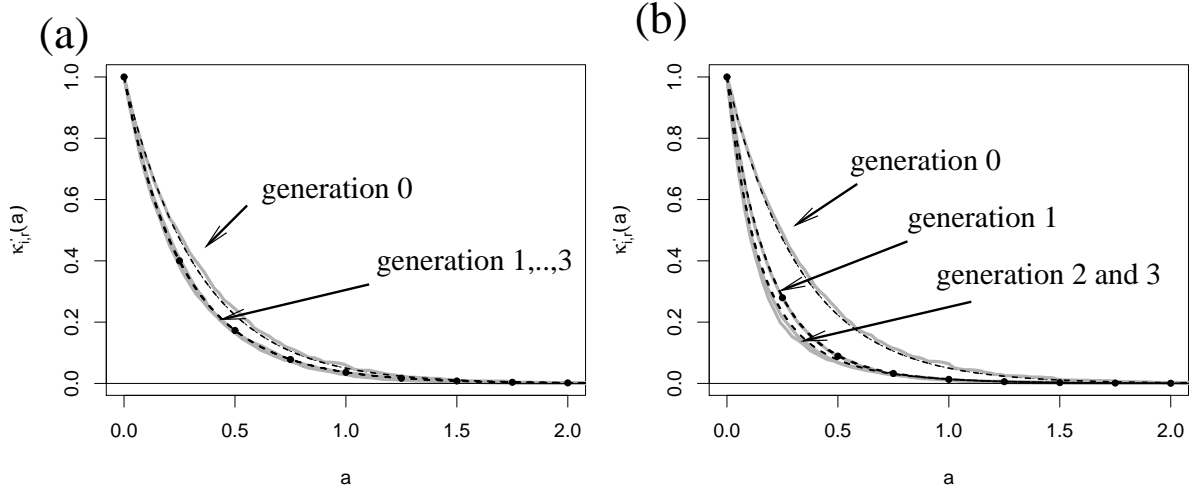}
	\caption{ Recursive forward tracing. The probability to be infectious at age $a$ after infection ($\kappa_{r,i}^+(a)$) for $p=0.3$ (panel a), and $p=0.8$ (panel b). Dashed lines: numerical solution of $\kappa_{r,i}^+(a)$; dashed-dotted  line: without tracing
	($\hat \kappa (a)=e^{-(\sigma+\alpha)a}$) coinciding with $\kappa_{r,0}^+(a)$;  dashed-lines with bullets: first order approximation of $\kappa_{r,i}^+(a)$; grey line: simulation. Parameters: $\beta=1.5$, $\alpha = 0.1$, $\sigma=2.9$, $E(K)=4$, $K$ is constant.}
	\label{fig:ForwardRec}
\end{figure}

\begin{prop}\label{ForwardRECURSfIRSTOrder}
Let $p_{obs}=\sigma/(\sigma+\alpha)$. The first order approximation of ${\kappa _{i}^ + (a)}$ reads
\begin{equation} \label{eq:50}
\kappa _{r,i}^ + (a) = \hat \kappa (a)\left( {1 - p{p_{obs}}\left( {1 - \hat \kappa (a)} \right)} \right) + \mathcal{O} ({p^2}).
\end{equation}
\end{prop}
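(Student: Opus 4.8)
The plan is to run an induction on $i$ starting from the rewritten recursion~(\ref{eq:37}), exploiting the fact that the tracing correction there is already proportional to $p$. First I would observe that setting $p=0$ in the recursion gives $\kappa_{r,i}^+(a)=\widehat\kappa(a)$ for every generation, so $\kappa_{r,i-1}^+(b)=\widehat\kappa(b)+\mathcal{O}(p)$. Because the entire quotient in~(\ref{eq:37}) is multiplied by the prefactor $p$, the $\mathcal{O}(p)$ parts of $\kappa_{r,i-1}^+$ in both numerator and denominator only feed back at order $p^2$; hence I may replace $\kappa_{r,i-1}^+$ by $\widehat\kappa$ throughout the fraction. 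This decouples the recursion at first order and is exactly why the final answer carries no $i$-dependence.

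Next I would simplify the integrand. Since $\widehat\kappa(b)=e^{-(\alpha+\sigma)b}$ one has $-\widehat\kappa'(b)-\alpha\widehat\kappa(b)=\sigma\,\widehat\kappa(b)$, which turns the numerator into $\sigma\int_0^\infty \min\{a,b\}\,\widehat\kappa(b)\,db$ and the denominator into the elementary $\int_0^\infty\widehat\kappa(\tau)\,d\tau=1/(\alpha+\sigma)$. The only genuine computation is the weighted integral $\int_0^\infty \min\{a,b\}\,e^{-(\alpha+\sigma)b}\,db$, which I would evaluate by splitting the range at $b=a$: integration by parts on $\int_0^a b\,e^{-(\alpha+\sigma)b}\,db$ together with the tail $\int_a^\infty a\,e^{-(\alpha+\sigma)b}\,db$ combine, the boundary terms cancelling, to give $(1-\widehat\kappa(a))/(\alpha+\sigma)^2$.

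Assembling the pieces, the fraction reduces to $\dfrac{\sigma(1-\widehat\kappa(a))/(\alpha+\sigma)^2}{1/(\alpha+\sigma)}=\dfrac{\sigma}{\alpha+\sigma}\,(1-\widehat\kappa(a))=p_{obs}\,(1-\widehat\kappa(a))$, which, substituted back into~(\ref{eq:37}), yields precisely the claimed profile $\kappa_{r,i}^+(a)=\widehat\kappa(a)\big(1-p\,p_{obs}(1-\widehat\kappa(a))\big)+\mathcal{O}(p^2)$. I would add the caveat that the statement is to be read for $i\ge 1$, since $\kappa_{r,0}^+=\widehat\kappa$ is exact and has no first-order term; the induction step being uniform in $i$, the same first-order profile is then inherited at every generation $i\ge 1$.

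The step I expect to require the most care is the justification of replacing $\kappa_{r,i-1}^+$ by $\widehat\kappa$ inside the quotient: one must confirm that the denominator $\int_0^\infty \kappa_{r,i-1}^+(\tau)\,d\tau$ stays bounded away from zero and that the numerator integral converges, so that the $\mathcal{O}(p)$ perturbation of the ratio is genuinely $\mathcal{O}(1)$ and is demoted to $\mathcal{O}(p^2)$ by the leading factor $p$. Granting the integrability already used implicitly in writing~(\ref{eq:37}), everything else is a routine calculation.
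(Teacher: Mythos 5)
Your proposal is correct and follows essentially the same route as the paper: substitute the zero-order approximation $\kappa_{r,i-1}^+=\widehat\kappa+\mathcal{O}(p)$ into the tracing term (legitimate because that term carries the prefactor $p$), use $-\widehat\kappa'(b)-\alpha\widehat\kappa(b)=\sigma\widehat\kappa(b)$, and evaluate an elementary integral to obtain $p_{obs}\left(1-\widehat\kappa(a)\right)$ --- the only cosmetic difference being that you work from the rewritten form with $\min\{a,b\}$ in eqn.~(\ref{eq:37}) while the paper evaluates the equivalent iterated double integral directly. Your added caveat that the statement should be read for $i\geq 1$ (since $\kappa_{r,0}^+=\widehat\kappa$ exactly, so the first-order term is absent at generation zero) is a correct and worthwhile precision that the paper leaves implicit.
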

\begin{proof} As above, in the first order approximation for backward tracing, we note that we only need the zero order approximation of $\kappa_{r,i-1}$ if that probability is multiplied by $p$. Therefore, 
\begin{align*}
\kappa _{r,i}^ + (a) ={}& \hat \kappa (a)\left( {1 - p\frac{{\int\limits_0^\infty  {\int\limits_0^a {\sigma\,e^{-(\alpha+\sigma)(b + c)}dcdb} } }}{{\int\limits_0^\infty  {e^{-(\alpha+\sigma)\,b}\,db} }}} \right) + \mathcal{O} ({p^2})
=  \hat \kappa (a)\left( {1 - p{p_{obs}}\left( {1 - \hat \kappa (a)} \right)} \right) + \mathcal{O} ({p^2}).
\end{align*}
\end{proof}
As before, the theory and simulations agree nicely as shown in Fig.~\ref{fig:ForwardRec}.

\subsubsection{One-step method}

Similar to backward tracing, the difference between the recursive method and the one-step method is that in the one-step method, an infected individual discovered via tracing cannot trigger another tracing event.

\begin{theorem}
 For the probability that an individual reaches age $a$ of infection in forward tracing with the one-step method, $\kappa _{o,i}^+(a)$, the following applies:
\begin{equation} \label{eq:57}
\kappa _{o,i}^+(a)  = \hat \kappa (a)\left( {1 - p\frac{{\int\limits_0^\infty  {\int\limits_0^a {\sigma \kappa _{o,i - 1}^{+} (b + c)dcdb} } }}{{\int\limits_0^\infty  {\kappa _{o,i - 1}^{+} (b)db} }}} \right).
\end{equation}
\end{theorem}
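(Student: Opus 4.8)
The plan is to reproduce the argument underlying Theorem~\ref{forwardTracingRecursTheo} almost verbatim, changing only the rate at which the infector (the upstream, generation $i-1$ node) triggers a tracing event. In the recursive case that rate was the \emph{full} observation rate of the infector --- its hazard minus the rate of unobserved recovery, i.e.\ $-{\kappa_{r,i-1}^+}'(b+c)/\kappa_{r,i-1}^+(b+c)-\alpha$ --- precisely because a recursively traced infector may itself have been discovered through tracing. In the one-step mode the chain is cut after a single step: the focal individual can be reached only if its infector is detected \emph{directly}, that is by diagnosis at rate $\sigma$. The sole substitution I would therefore make is to replace the hazard-based detection rate by the constant $\sigma$, exactly as was done in passing from Theorem~\ref{backRecurs} to Theorem~\ref{backOneStep} for backward tracing.

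First I would compute the conditional probability $\kappa_{o,i}^+(a|b)$ that a generation-$i$ individual is still infectious at age $a$ given that its infector had age $b$ at the transmission event. Following the derivation preceding eqn.~(\ref{forwardRekEqna}), the probability that the infector is still infectious at age $b+c$ is $\kappa_{o,i-1}^+(b+c)/\kappa_{o,i-1}^+(b)$; multiplying by the one-step detection rate $\sigma$ and integrating over $c\in[0,a)$ gives the probability of a tracing event in $[0,a)$, so that
\begin{equation*}
\kappa_{o,i}^+(a|b)=\hat\kappa(a)\left(1-p\int_0^a \sigma\,\frac{\kappa_{o,i-1}^+(b+c)}{\kappa_{o,i-1}^+(b)}\,dc\right),
\end{equation*}
or, clearing the denominator, $\kappa_{o,i}^+(a|b)\,\kappa_{o,i-1}^+(b)=\hat\kappa(a)\big(\kappa_{o,i-1}^+(b)-p\int_0^a\sigma\,\kappa_{o,i-1}^+(b+c)\,dc\big)$. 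Next I would remove the conditioning on $b$: since the transmission event along the edge to the infector occurs with rate $\beta\,\kappa_{o,i-1}^+(b)$, the age of the infector at that event is distributed as $\kappa_{o,i-1}^+(b)/\int_0^\infty\kappa_{o,i-1}^+(\tau)\,d\tau$ (the factor $\beta$ cancels), exactly as in eqn.~(\ref{forwardRekEqnb}). Integrating $\kappa_{o,i}^+(a|b)$ against this density and pulling the $b$-independent factor $\hat\kappa(a)$ out of the integral produces eqn.~(\ref{eq:57}) directly.

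The only genuine subtlety --- and the step I would take most care over --- is justifying the replacement of the detection rate by $\sigma$. The hard part will be arguing cleanly that, because tracing is aborted after one step, a tracing event reaching the focal individual requires the infector to have been diagnosed spontaneously (rate $\sigma$) rather than traced itself; the infector's own tracing probability would contribute only paths of length $\ge 2$, which belong to the recursive scheme and not to the one-step scheme. Everything else is a routine repetition of the recursive forward calculation, so I expect no further obstacle beyond confirming that this single modification of the rate in eqn.~(\ref{backTracinfgRate}) is the correct one.
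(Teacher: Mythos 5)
Your proposal is correct and follows exactly the paper's own route: the paper proves eqn.~(\ref{eq:57}) by repeating the recursive forward-tracing argument of Theorem~\ref{forwardTracingRecursTheo} with the detection rate $-{\kappa_{r,i-1}^{+}}'(b+c)-\alpha\kappa_{r,i-1}^{+}(b+c)$ replaced by $\sigma\kappa_{o,i-1}^{+}(b+c)$, which is precisely your substitution of the constant diagnosis rate $\sigma$ for the hazard-based rate of eqn.~(\ref{backTracinfgRate}). Your justification of that substitution (a traced infector cannot itself trigger a further tracing step, so only direct diagnosis of the infector can reach the focal individual) is the same reason the paper invokes, and the rest of your computation, including the weighting by $\kappa_{o,i-1}^{+}(b)/\int_0^\infty\kappa_{o,i-1}^{+}(\tau)\,d\tau$, matches the paper's derivation.
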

\begin{proof}
The proof follows theorem 2.8 by replacing $- \kappa {_{r,i - 1}^{+ \prime} }(a) - \alpha \kappa _{r,i - 1}^+ (a)$ with $- \sigma \kappa _{o,i - 1}^{+} (a)$ for the reason stated above.
\end{proof}

Recall that $p_{obs}=\sigma/(\sigma+\alpha)$ is the probability to be diagnosed (if no contact tracing takes place). With this definition, we find the following result. 
\begin{prop} The first order approximation for $\kappa {_{o,i}^{+ \prime }}(a)$ reads

\begin{equation} \label{eq:59}
\kappa _{o,i}^+ (a) = \hat \kappa (a)\left( {1 - p{p_{obs}}\left( {1 - \hat \kappa (a)} \right)} \right) + \mathcal{O} ({p^2}).
\end{equation}
\end{prop}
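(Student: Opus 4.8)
The plan is to mirror the argument used for the recursive forward case (Proposition~\ref{ForwardRECURSfIRSTOrder}), exploiting the fact that the correction term in eqn.~(\ref{eq:57}) is already of order $p$. Since the bracketed fraction is multiplied by $p$, to obtain the first order approximation in $p$ it suffices to insert the zeroth order expression $\kappa_{o,i-1}^+(a)=\widehat\kappa(a)=e^{-(\alpha+\sigma)a}$ into both numerator and denominator of that fraction; any deviation of $\kappa_{o,i-1}^+$ from $\widehat\kappa$ contributes only at order $p^2$ and may be absorbed into the $\mathcal{O}(p^2)$ remainder.

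First I would substitute $\kappa_{o,i-1}^+(b+c)\to\widehat\kappa(b+c)$ and $\kappa_{o,i-1}^+(b)\to\widehat\kappa(b)$ in eqn.~(\ref{eq:57}), yielding
$$
\kappa_{o,i}^+(a)=\widehat\kappa(a)\left(1-p\,\frac{\int_0^\infty\int_0^a\sigma\,e^{-(\alpha+\sigma)(b+c)}\,dc\,db}{\int_0^\infty e^{-(\alpha+\sigma)b}\,db}\right)+\mathcal{O}(p^2).
$$
Then I would evaluate the elementary integrals by noting that the numerator double integral separates into a product of two single integrals: the denominator equals $1/(\alpha+\sigma)$, while the numerator factorizes as $\sigma\cdot\tfrac{1}{\alpha+\sigma}\cdot\tfrac{1-\widehat\kappa(a)}{\alpha+\sigma}$. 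Forming the ratio gives $\tfrac{\sigma}{\alpha+\sigma}\bigl(1-\widehat\kappa(a)\bigr)=p_{obs}\,\bigl(1-\widehat\kappa(a)\bigr)$, which establishes eqn.~(\ref{eq:59}).

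It is worth highlighting why this coincides with the recursive result (eqn.~(\ref{eq:50})). In the recursive formula the integrand is $-\widehat\kappa'(b+c)-\alpha\widehat\kappa(b+c)$, and a direct computation shows this equals $\sigma\,e^{-(\alpha+\sigma)(b+c)}=\sigma\widehat\kappa(b+c)$, which is precisely the one-step integrand. Thus the two detection terms already agree at zeroth order, and the distinction between recursive and one-step tracing -- which only manifests through tracing chains of length at least two -- enters at order $p^2$, exactly as noted earlier for the backward case. The computation involves no genuine obstacle; the only points requiring care are the bookkeeping that justifies replacing $\kappa_{o,i-1}^+$ by $\widehat\kappa$ inside the $p$-term, and the recognition that the numerator separates into a product of two elementary single integrals.
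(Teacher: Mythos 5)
Your proposal is correct and follows essentially the same route as the paper: the paper's proof simply refers back to Proposition~\ref{ForwardRECURSfIRSTOrder}, which performs exactly your substitution $\kappa_{o,i-1}^+\to\widehat\kappa$ inside the $p$-term of eqn.~(\ref{eq:57}) and evaluates the same separable integrals to obtain $p_{obs}\bigl(1-\widehat\kappa(a)\bigr)$. Your added observation that $-\widehat\kappa'(b+c)-\alpha\widehat\kappa(b+c)=\sigma\widehat\kappa(b+c)$, making the recursive and one-step integrands agree at zeroth order, is a correct and slightly more explicit account of why the two approximations coincide than the paper's remark.
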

\begin{proof}
See proof of Proposition~\ref{ForwardRECURSfIRSTOrder}.
\end{proof}

As expected, the first order approximations of $\kappa _{o,i}^ + $ and $\kappa _{r,i}^ +$ coincide because the difference between the recursive and the one-step method results only from the second tracing step. This comes with a probability of $\mathcal{O} ({p^2})$, thus the difference between the two methods in the approximation is expressed only in the remainder of the term. The first order approximations of $\kappa _{r,i}^ + $ and $\kappa _{o,i}^ +$ are independent of the generation of the individual in consideration and also agree with the given first order approximation of \citep{muller2016effect}. Again, this is not surprising given that the formulas for $\kappa _{r,i}^ + $ and $\kappa _{o,i}^+$ have been developed using the same approach. A comparison of theoretical results and simulations is presented in Fig.~\ref{fig:ForwardOneStep}.

\begin{figure}[h!]
	\centering
	\includegraphics[width=\textwidth]{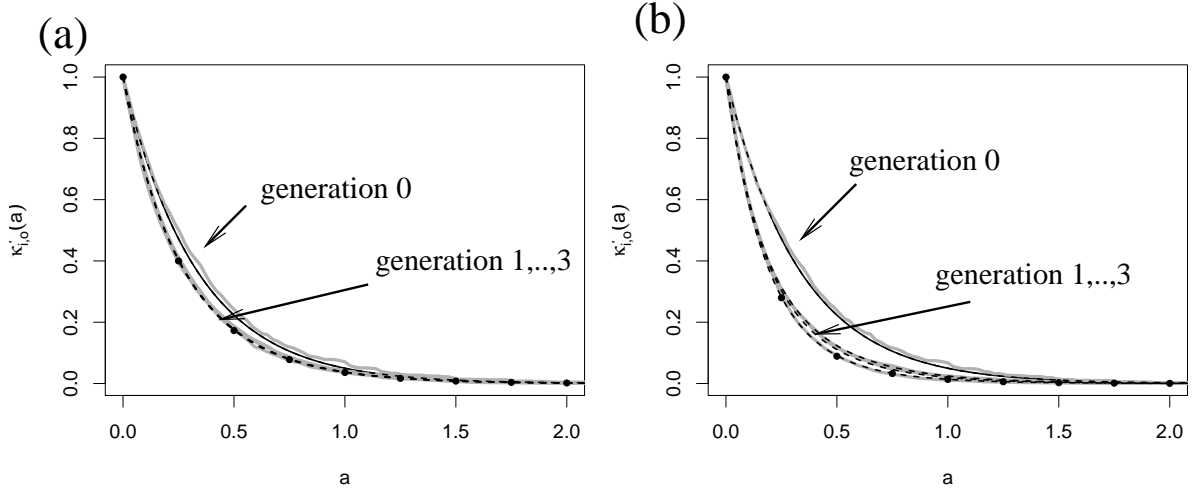}
	\caption{ One-step forward tracing. The probability to be infectious at age $a$ after infection ($\kappa_{o,i}^+(a)$) for $p=0.3$ (panel a), and $p=0.8$ (panel b). 
	Dashed lines: numerical solution of $\kappa_{o,i}^+(a)$; solid line: without tracing ($\hat \kappa (a)=e^{-(\sigma+\alpha)a}$) coinciding with $\kappa_{o,0}^+(a)$; dashed-lines with bullets: first order approximation of $\kappa_{o,i}^+(a)$; grey line: simulation. Parameters: $\beta=1.5$, $\alpha = 0.1$, $\sigma=2.9$, $E(K)=4$, $K$ is constant.}
	\label{fig:ForwardOneStep}
\end{figure}

\subsection{Full tracing}
In the following, a formula for full tracing, ${\kappa _{r,i}(a)}$, will be developed. This is precisely the probability that an infected person reaches the age of infection a when full tracing is applied. For this, the formula for forward and backward tracing set up in the previous sections must be suitably combined.
 
\begin{theorem}
We have $\kappa_{r,0}(a)=\kappa_r^-(a)$, and 
for $i>0$ 
\begin{equation}
{\kappa _{{r,i}}}(a) = \frac{{\int\limits_0^\infty  {{\kappa _r^-}(a)\left( {{\kappa _{r,i - {1}}}(b) - p\int\limits_0^a { - {{\kappa '}_{r,i - {1}}}(b + c) - \alpha } {\kappa _{r,i - 1}}(b + c)dc} \right)} db}}{{\int\limits_0^\infty  {{\kappa _{r,i - {1}}}(b)db} }},
\end{equation}
where  $\kappa _r^-(a)$ is given by eqn.~(\ref{eq:2}).
\end{theorem}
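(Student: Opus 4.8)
The plan is to reproduce, essentially verbatim, the argument of the forward recursive equation in Theorem~\ref{forwardTracingRecursTheo}, performing two substitutions dictated by the tree geometry. First I would dispose of the base case: the root (generation $0$) has no infector, so no forward tracing event can ever reach it, and only its downstream subtree can act. The root is thus subject to backward tracing alone, which gives $\kappa_{r,0}(a)=\kappa_r^-(a)$ with $\kappa_r^-$ as in Theorem~\ref{backRecurs}.

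For the inductive step I would fix a focal individual in generation $i>0$ and condition on the age $b$ of its infector at the transmission event, defining $\kappa_{r,i}(a|b)$ exactly as in the forward case. The detection of the focal individual splits into two channels whose actor sets are disjoint on the tree: the focal individual's own downstream subtree (backward tracing) and the single upstream infector together with everything above it (forward tracing). The backward channel is generation-independent and contributes precisely the factor $\kappa_r^-(a)$ of Theorem~\ref{backRecurs}; this replaces the tracing-free baseline $\widehat\kappa(a)$ that appeared in the forward-only computation. The forward channel contributes, as in Theorem~\ref{forwardTracingRecursTheo}, the factor $1-p\int_0^a\left(\frac{-\kappa'_{r,i-1}(b+c)-\alpha\kappa_{r,i-1}(b+c)}{\kappa_{r,i-1}(b)}\right)dc$, with the one change that the infector's survival and detection hazard must now be taken with respect to the full process, so the forward-only probabilities $\kappa_{r,i-1}^+$ are replaced throughout by the full probabilities $\kappa_{r,i-1}$. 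Multiplying the two factors and clearing the denominator $\kappa_{r,i-1}(b)$ yields the bracketed expression appearing in the statement.

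Finally I would remove the conditioning on $b$. As in Theorem~\ref{forwardTracingRecursTheo}, the age of the infector at the transmission event is distributed with density proportional to $\kappa_{r,i-1}(b)$, so averaging $\kappa_{r,i}(a|b)$ against $\kappa_{r,i-1}(b)\,db\big/\int_0^\infty \kappa_{r,i-1}(\tau)\,d\tau$ produces the quotient in the theorem; since $\kappa_r^-(a)$ does not depend on $b$ it factors out of the numerator integral, giving exactly the stated form.

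The step I expect to be the main obstacle is justifying that the backward and forward contributions combine multiplicatively, i.e.\ that the focal individual's own downstream tracing may be represented by the backward-only quantity $\kappa_r^-(a)$ even though, in the full recursive process, the infector's survival $\kappa_{r,i-1}$ already folds in backward tracing that originates from the focal individual's subtree. The resolution rests on the tree structure together with the ``at most one infector'' property: any tracing chain reaching the infector from the focal individual's direction must pass through the focal individual first, so conditioned on the focal individual not yet being detected, the two channels act on disjoint sets of individuals and may be treated as independent --- the same heuristic separation of forward and backward tracing employed in~\citep{muller2000contact,muller2016effect}. Turning this disjointness-of-actors picture into a genuine independence statement, rather than invoking it heuristically, is the delicate point.
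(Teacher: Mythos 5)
Your proposal matches the paper's own proof essentially verbatim: the paper likewise argues that $\kappa_{r,i}(a)$ factors as the backward-tracing survival probability $\kappa_r^-(a)$ (replacing the baseline $\widehat\kappa(a)$) times the forward-tracing factor of Theorem~\ref{forwardTracingRecursTheo} with $\kappa_{r,i-1}^+$ replaced by $\kappa_{r,i-1}$, then averages over the infector's age $b$ exactly as you do. The ``delicate point'' you flag --- that the multiplicative combination of the two channels is justified only heuristically by the disjointness of actor sets on the tree --- is not resolved in the paper either, which simply asserts the decomposition citing~\citep{muller2000contact}, so your proof is at the same level of rigour as the original.
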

\begin{proof} The proof parallels the proof of theorem~\ref{forwardTracingRecursTheo} (forward tracing), where we use that $\kappa_{r,i}(a)$ equals the probability to be still infectious at age $a$ under the condition that  the individual was not hit by a forward tracing event, times the probability that indeed the individual was not removed by backward tracing. As we now allow for backward tracing the probability to be infectious at age conditioned on no forward tracing is given by $\kappa_r^-(a)$. Otherwise, the argument remains the same. 
\end{proof}

\begin{figure}[h!]
	\centering
	\includegraphics[width=\textwidth]{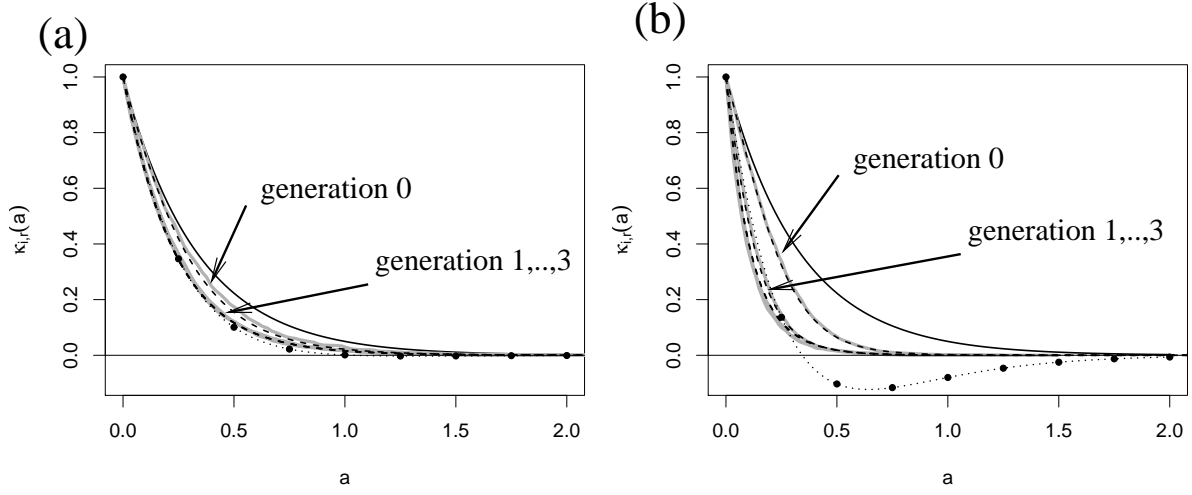}
	
	\caption{Recursive full tracing. The probability to be infectious at age $a$ after infection ($\kappa_{r,i}(a)$) for $p=0.3$ (panel a), and $p=0.8$ (panel b).  Solid line: without tracing
	($\hat \kappa (a)=e^{-(\sigma+\alpha)a}$); dashed lines: numerical solution of $\kappa_{r,i}(a)$; dashed-dotted lines with bullets: first order approximation of $\kappa_{r,i}(a)$; grey line: simulation. 
	Parameters: $\beta=1.5$, $\alpha = 0.1$, $\sigma=2.9$, $E(K)=4$, $K$ is constant.}
	
	\label{fig:FullRec}
\end{figure}

The following proposition can be shown along the lines for the corresponding approximations in forward- and backward tracing; the simulations confirm also in that case that we catch the heart of the process (Fig.~\ref{fig:FullRec}).

\begin{prop}
The first order approximation of ${\kappa _{{i}}(a)}$ reads
\begin{equation}\label{fullRecApprox}
{\kappa _{r,i}}(a) = {\hat \kappa (a)}\left( {1 + \frac{{p\sigma E(K)}}{{\alpha  + \sigma  - \beta }}\left( {{e^{ - \beta a}} - {\hat \kappa (a)}} \right) - \frac{{p\sigma E(K)}}{{\alpha  + \sigma }}\left( {1 - {\hat \kappa (a)}} \right) - \frac{{p\sigma }}{{\alpha  + \sigma }}\left( {1 - {\hat \kappa (a)}} \right)} \right) + {\cal O}({p^2}).
\end{equation}
\end{prop}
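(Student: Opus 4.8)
The plan is to exploit the fact that, in the full--tracing recursion of the preceding theorem, the backward factor $\kappa_r^-(a)$ does not depend on the integration variable $b$ and can therefore be pulled out of the $b$--integral. This turns the right--hand side into a clean product of a backward contribution and a forward contribution,
\begin{equation*}
\kappa_{r,i}(a) = \kappa_r^-(a)\,\left(1 - p\,\frac{\int_0^\infty\int_0^a\big(-\kappa_{r,i-1}'(b+c) - \alpha\,\kappa_{r,i-1}(b+c)\big)\,dc\,db}{\int_0^\infty\kappa_{r,i-1}(b)\,db}\right),
\end{equation*}
so that the first--order analysis decouples into a piece already handled for backward tracing and a piece already handled for forward tracing.

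First I would expand the backward factor. By the proposition leading to eqn.~(\ref{eq:17}) we have $\kappa_r^-(a) = \hat\kappa(a)\big(1 + p\,\Phi(a)\big) + \mathcal{O}(p^2)$, where $p\,\Phi(a)$ is the bracketed backward correction, i.e. $\Phi(a) = \frac{\sigma E(K)}{\alpha+\sigma-\beta}\big(e^{-\beta a}-\hat\kappa(a)\big) - \frac{\sigma E(K)}{\alpha+\sigma}\big(1-\hat\kappa(a)\big)$. Next I would expand the forward factor. Since at $p=0$ full tracing reduces to no tracing, $\kappa_{r,i-1}(a)\big|_{p=0} = \hat\kappa(a)$, and because the forward factor already carries a prefactor $p$, I may replace $\kappa_{r,i-1}$ by $\hat\kappa$ inside its double integral at the cost of an $\mathcal{O}(p^2)$ error. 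Using $-\hat\kappa'(a) - \alpha\hat\kappa(a) = \sigma\hat\kappa(a)$, the resulting integral is precisely the one evaluated in the proof of Proposition~\ref{ForwardRECURSfIRSTOrder}, whence the forward factor equals $1 - p\,p_{obs}\big(1-\hat\kappa(a)\big) + \mathcal{O}(p^2)$ with $p_{obs} = \sigma/(\alpha+\sigma)$.

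Finally I would multiply the two first--order expansions. The cross term between the backward correction $p\,\Phi(a)$ and the forward correction $p\,p_{obs}\big(1-\hat\kappa(a)\big)$ is of order $p^2$ and is discarded, so the two corrections simply add; substituting $\Phi(a)$ and $p_{obs}$ reproduces eqn.~(\ref{fullRecApprox}), including the extra forward term $-\frac{p\sigma}{\alpha+\sigma}\big(1-\hat\kappa(a)\big)$. I expect no serious difficulty here: the only points requiring care are the justification that $\kappa_{r,i-1}\big|_{p=0}=\hat\kappa$ (so that, exactly as in the forward case, the generation index drops out of the forward contribution at first order) and the observation that the additivity of the two corrections is precisely the statement that their product is $\mathcal{O}(p^2)$.
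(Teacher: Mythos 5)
Your proposal is correct and is exactly the argument the paper has in mind: the paper omits the proof, stating only that the result follows ``along the lines'' of the backward approximation (eqn.~(\ref{eq:17})) and the forward approximation (Proposition~\ref{ForwardRECURSfIRSTOrder}), and your factorization of the full-tracing recursion into $\kappa_r^-(a)$ times a forward factor, with the cross term absorbed into $\mathcal{O}(p^2)$, is precisely that combination spelled out. The two points you flag (the zeroth-order identity $\kappa_{r,i-1}\big|_{p=0}=\hat\kappa$, which follows by induction from $\kappa_{r,0}=\kappa_r^-$ and $G(1)=1$, and the additivity of the corrections) are handled correctly.
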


We find the parallel results for one-step tracing. Simulations are presented in Fig.~\ref{fig:FullOneStep}.
\begin{theorem}
We have $\kappa_{o,0}(a)=\kappa_o^-(a)$, and 
for $i>0$ 
\begin{equation}
\kappa _{o,i}(a) 
= 
\frac{\int\limits_0^\infty \,\kappa _o^-(a)\left( \kappa _{o,i - 1}(b) - p\,\sigma\,\int\limits_0^a    \kappa _{o,i - 1}(b + c)\,dc \right) db}
{\int\limits_0^\infty  \kappa _{o,i - 1}(b)db },
\end{equation}
where  $\kappa _o^-(a)$ is given by eqn.~(\ref{eq:28}). The first order approximation is identical with that for $\kappa_{r,i}^-(a)$ in eqn.~(\ref{fullRecApprox}).
\end{theorem}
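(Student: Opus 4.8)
The plan is to mirror the recursive full tracing theorem that immediately precedes this statement, applying the one-step modification that was already used when passing from recursive forward tracing (Theorem~\ref{forwardTracingRecursTheo}) to one-step forward tracing (eqn.~\ref{eq:57}). First I would recall the decomposition underlying the recursive full tracing formula: the probability $\kappa_{o,i}(a)$ to still be infectious at age $a$ factorizes into the probability of surviving backward tracing, conditioned on no forward event having occurred, times the probability of surviving forward tracing. Under one-step backward tracing this generation-independent survival probability is exactly $\kappa_o^-(a)$ as given by eqn.~(\ref{eq:28}), so it replaces the $\kappa_r^-(a)$ that appears in the recursive full tracing formula.

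Next I would treat the forward part along the lines of the one-step forward tracing theorem. The only quantity that changes relative to the recursive case is the rate at which the single upstream infector can induce a tracing event in the focal individual. In recursive tracing this rate equals the full removal rate minus unobserved recovery, i.e.\ $-\kappa'_{r,i-1}(b+c) - \alpha\,\kappa_{r,i-1}(b+c)$; in one-step tracing the infector can only trigger a tracing event if it is directly diagnosed, which happens at rate $\sigma$, so the rate reduces to $\sigma\,\kappa_{o,i-1}(b+c)$. This is precisely the substitution that turns eqn.~(\ref{eq:37}) into eqn.~(\ref{eq:57}). Carrying both replacements into the recursive full tracing formula, and weighting the outer integral over $b$ by the density of the infector's age at the infection event normalized by $\int_0^\infty \kappa_{o,i-1}(b)\,db$ (exactly as in Theorem~\ref{forwardTracingRecursTheo}), yields the claimed expression for $\kappa_{o,i}(a)$.

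For the first-order approximation I would expand the formula in $p$. Because every tracing contribution carries an explicit factor of $p$, only the zeroth-order value $\kappa_{o,i-1}=\hat\kappa$ enters the integrals, so the generation index drops out. The backward factor $\kappa_o^-(a)$ contributes its own first-order approximation (eqn.~\ref{eq:32}), which coincides with the recursive backward approximation (eqn.~\ref{eq:17}); the forward integral contributes the term $-p\,p_{obs}\,(1-\hat\kappa(a))$ exactly as in the one-step forward approximation (eqn.~\ref{eq:59}), which in turn agrees with the recursive forward approximation (eqn.~\ref{eq:50}). Summing the backward and forward contributions reproduces eqn.~(\ref{fullRecApprox}) term by term, establishing the second claim.

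The main obstacle is conceptual rather than computational: one must justify that the two tracing mechanisms combine multiplicatively in the one-step setting, i.e.\ that conditioning the backward survival on the absence of a forward event and then layering the forward structure on top neither double-counts nor omits tracing paths. This is handled by the same observation as in the recursive full tracing theorem---backward events are triggered solely by downstream nodes while a forward event can only originate from the single upstream infector, so the two event sets are disjoint and the probabilities of their non-occurrence factorize. The one-step restriction does not disturb this disjointness; it only alters the per-edge detection rate, so the factorization argument carries over verbatim.
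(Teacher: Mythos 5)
Your proof is correct and follows exactly the route the paper intends: the paper gives no separate proof for the one-step full tracing theorem, merely asserting it as the ``parallel result,'' and its implicit argument is precisely your combination of the recursive full-tracing factorization (backward survival $\kappa_o^-(a)$ conditioned on no forward event, times forward survival) with the one-step substitution of the detection rate $\sigma\,\kappa_{o,i-1}(b+c)$ for the full hazard-minus-$\alpha$ term. Your expansion verifying the first-order claim, including the disjointness justification for the multiplicative combination, matches the paper's reasoning for the recursive case and carries over as you state.
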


\begin{figure}[h!]
	\centering
	\includegraphics[width=\textwidth]{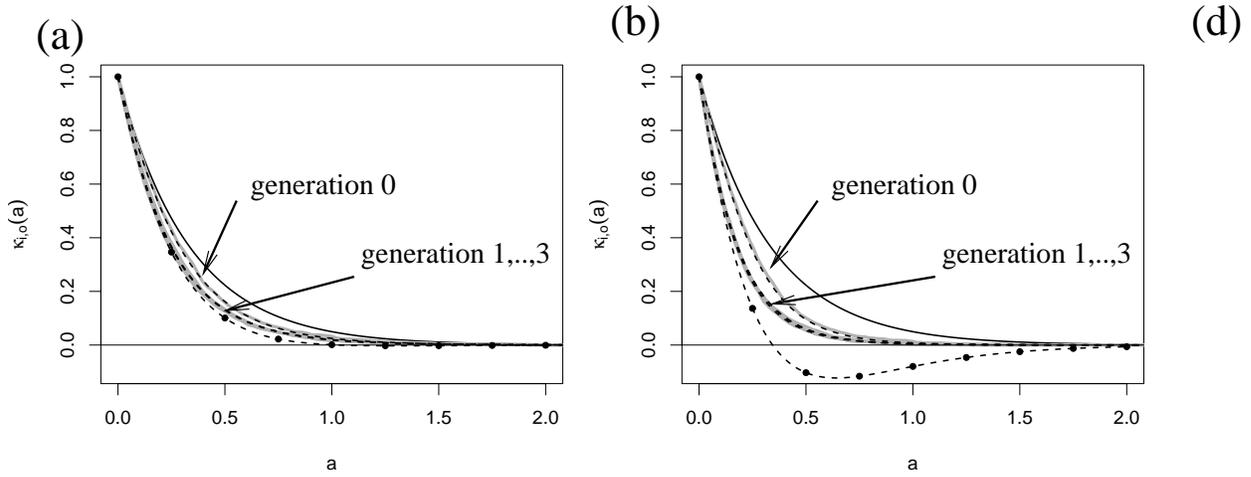}
	\caption{One-step full tracing. The probability to be infectious at age $a$ after infection ($\kappa_{o,i}(a)$) for $p=0.3$ (panel a), and $p=0.8$ (panel b).  Solid line: without tracing
	($\hat \kappa (a)=e^{-(\sigma+\alpha)a}$); dashed lines: numerical solution of $\kappa_{o,i}(a)$; dashed-lines with bullets: first order approximation of $\kappa_{o,i}(a)$; grey line: simulation. 
	Parameters: $\beta=1.5$, $\alpha = 0.1$, $\sigma=2.9$, $E(K)=4$, $K$ is constant.}
	\label{fig:FullOneStep}
\end{figure}

\begin{rem}
We understand that the structure of forward tracing does not differ from the present case and a homogeneously mixing population. For backward tracing, we have the approximation results (Proposition~\ref{approxBackwardHomo1} and \ref{approxBackwardHomo2}). Therefore, those approximation results directly carry over to full tracing. Particularly, if we consider trees where $K$ is distributed according to a poisson distribution, and $\beta$ is small, s.t.\ $E(K)\rightarrow \infty$, $\beta\rightarrow 0$, while $\beta\,E(K)\rightarrow \beta_{ind}$, then $\kappa_{r,i}(a)$ tends to the corresponding probability for full tracing in a homogeneous population (if we consider the onset of the epidemic). The parallel result holds true for one-step tracing.
\end{rem}

\subsection{Reproduction number}
The reproduction number is defined as the average number of secondary infections generated by one infectious individual in a population of completely susceptible individuals \citep{anderson1992infectious, diekmann1995legacy}.
Without contact tracing ($p=0$), the basic reproduction number is given by 
\begin{eqnarray}
R_0&=& \int\limits_0^\infty  {\beta \left( {\sum\limits_{k = 0}^\infty  {P(K = k)} }\hat \kappa (a) \right)da} 
= \int\limits_0^\infty  {\beta E(K)\hat \kappa (a)da}
= \frac{{\beta E(K)}}{{\alpha  + \sigma }}.
\end{eqnarray}
This reproduction number is decreased by contact tracing; we denote the reproduction number with contact tracing by $R_{ct}$. 

\begin{theorem}
Under full tracing (either recursive or one-step), the reproduction number of the ith generation reads 

\begin{equation} \label{eq:60}
R_{ct} = R_0\Bigg( 1 - \frac{{p\sigma \beta E{{(K)}}}}{{{{2(\alpha+\sigma)(\alpha  + \sigma + \beta)}}}} - \frac{{p\sigma }}{{2(\alpha  + \sigma )}} \Bigg) + {\cal O}(p^2).
\end{equation}

\end{theorem}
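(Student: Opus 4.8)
The plan is to mirror the definition of $R_0$ used just above the statement, replacing the tracing-free survival probability $\hat\kappa$ by the full-tracing probability. A focal infected individual of generation $i$ carries on average $E(K)$ downstream edges, each transmitting along contacts at rate $\beta$, and it is still infectious at age $a$ with probability $\kappa_{r,i}(a)$ (respectively $\kappa_{o,i}(a)$ in the one-step case). Integrating the instantaneous rate at which downstream secondary cases are produced over the whole course of infection, I would write
\begin{equation}
R_{ct} = \beta E(K)\int_0^\infty \kappa_{r,i}(a)\, da,
\end{equation}
which specialises to the stated $R_0 = \beta E(K)/(\alpha+\sigma)$ at $p=0$, since $\kappa_{r,i}(a)|_{p=0}=\hat\kappa(a)$ and $\int_0^\infty \hat\kappa(a)\,da = 1/(\alpha+\sigma)$. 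This representation is consistent with the paper's own convention for $R_0$, so no separate justification of the ``one contact per edge'' bookkeeping is needed.

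Next I would insert the first-order approximation (\ref{fullRecApprox}) for $\kappa_{r,i}(a)$ into this integral. That approximation is a linear combination of the elementary exponentials $\hat\kappa(a)=e^{-(\alpha+\sigma)a}$, $\hat\kappa(a)^2=e^{-2(\alpha+\sigma)a}$ and $\hat\kappa(a)e^{-\beta a}=e^{-(\alpha+\sigma+\beta)a}$, so every resulting integral is a standard $\int_0^\infty e^{-\lambda a}\,da = 1/\lambda$. Concretely I need $\int_0^\infty \hat\kappa = 1/(\alpha+\sigma)$, then $\int_0^\infty \hat\kappa(e^{-\beta a}-\hat\kappa)\,da = \frac{1}{\alpha+\sigma+\beta}-\frac{1}{2(\alpha+\sigma)}$, and $\int_0^\infty \hat\kappa(1-\hat\kappa)\,da = \frac{1}{2(\alpha+\sigma)}$. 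Collecting the three correction terms, factoring out $R_0$, and discarding the $\mathcal O(p^2)$ remainder then delivers (\ref{eq:60}).

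The one delicate point is the term carrying the prefactor $1/(\alpha+\sigma-\beta)$, which is singular at $\beta=\alpha+\sigma$ (the condition $\beta\ne\alpha+\sigma$ figures in the earlier first-order propositions for exactly this reason). Here the integral supplies a compensating factor: one checks the identity $\frac{1}{\alpha+\sigma+\beta}-\frac{1}{2(\alpha+\sigma)} = \frac{\alpha+\sigma-\beta}{2(\alpha+\sigma)(\alpha+\sigma+\beta)}$, so the offending denominator cancels and the two $E(K)$-proportional corrections combine into $-\frac{p\sigma\beta E(K)}{2(\alpha+\sigma)(\alpha+\sigma+\beta)}$, a quantity regular across $\beta=\alpha+\sigma$. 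Verifying this cancellation — rather than the routine exponential integrals — is the main thing to get right. Finally I would remark that although $\kappa_{r,i}$ carries a generation index, its first-order approximation (\ref{fullRecApprox}) is independent of $i$, so $R_{ct}$ is generation-independent to first order in $p$; and since the first-order approximation is shared by $\kappa_{r,i}$ and $\kappa_{o,i}$, the same formula covers both recursive and one-step full tracing.
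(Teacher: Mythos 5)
Your proposal is correct and follows essentially the same route as the paper: express $R_{ct}=\beta E(K)\int_0^\infty \kappa_{\ast,i}(a)\,da$, substitute the first-order approximation (\ref{fullRecApprox}), and evaluate the elementary exponential integrals, with the $(\alpha+\sigma-\beta)^{-1}$ prefactor cancelling against the integral $\int_0^\infty\hat\kappa(a)\bigl(e^{-\beta a}-\hat\kappa(a)\bigr)\,da=\frac{\alpha+\sigma-\beta}{2(\alpha+\sigma)(\alpha+\sigma+\beta)}$ exactly as you describe. Your value of that integral in fact corrects a sign typo in the paper's proof (which prints $\alpha-\sigma-\beta$ in the numerator), and your closing remarks on generation-independence and the shared first-order approximation for recursive and one-step tracing match the paper's implicit use of $\kappa_{\ast,i}$.
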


\begin{proof}
For $p>0$, we use the first order approximation of $\kappa_{\ast,i}(a)$ and find
\begin{align*}
R_{ct} ={}& \int\limits_0^\infty  {\beta \left( {\sum\limits_{k = 0}^\infty  {P(K = k)} } {\kappa _{\ast,i}}(a)\right)da},\\
 ={}& \int\limits_0^\infty  {\beta E(K)\hat \kappa (a)\left[ {1 + \frac{{p\sigma E(K)}}{{\alpha  + \sigma  - \beta }}\left( {{e^{ - \beta a}} - \hat \kappa (a)} \right) - \frac{{p\sigma E(K)}}{{\alpha  + \sigma }}\left( {1 - \hat \kappa (a)} \right) - \frac{{p\sigma }}{{\alpha  + \sigma }}\left( {1 - \hat \kappa (a)} \right)} \right]da + {\cal O}({p^2})}
\end{align*}
The result follows with
$
\int\limits_0^\infty  {\hat \kappa (a)\left( {{e^{ - \beta a}} - \hat \kappa (a)} \right)da}={} \frac{\alpha-\sigma-\beta}{2(\alpha  + \sigma ){(\alpha  + \sigma  + \beta)}}$
and
$\int\limits_0^\infty  {\hat \kappa (a)\left( {1 - \hat \kappa (a)} \right)da} ={} \frac{1}{{2(\alpha  + \sigma )}}.
$
\end{proof}

\begin{rem} We again consider the limit $\beta\rightarrow 0$, $E(K)\rightarrow\infty$, while $\beta(E(K)+1)\rightarrow\beta_{ind}$. 
With $R_0=\beta_{ind}/(\sigma+\alpha)$ and $p_{obs}=\sigma/(\alpha+\sigma)$, we obtain the limit
\begin{eqnarray}
R_{ct} \rightarrow R_0\bigg(1-\,\frac1 2 \,\,p\,p_{obs}\, (R_0 + 1)\bigg) + \mathcal{O}(p^2),
\end{eqnarray}
which is identical with the first order approximation of the reproduciton number obtained in a randomly mixing model~\cite{muller2000contact,muller2016effect}.
\end{rem}

\subsection{Influence of the Graph structure}

The structure of the random tree, given by the generating function of the downstream degree distribution $G(s)$, only appeared in the consideration of backward tracing (see Theorem~\ref{backRecurs} and Theorem~\ref{backOneStep}). As full tracing is based on forward- and on backward tracing, also there, the graph structure comes in. 
However, all first order approximations did only depend on $E(K)$. If we expand $\kappa^-_*(a)$ w.r.t.\ $p$, a closer look shows that the $i$'th order correction term (proportional to $p^i$) depends on the first $i$ moments of 
$K$. As in applications, $p$ is rather small, we expect indeed the special choice of the random tree only has a minor influence, if all moments of $K$ are finite. 

\begin{figure}[h!]
	\begin{center}
		\includegraphics[width=7cm]{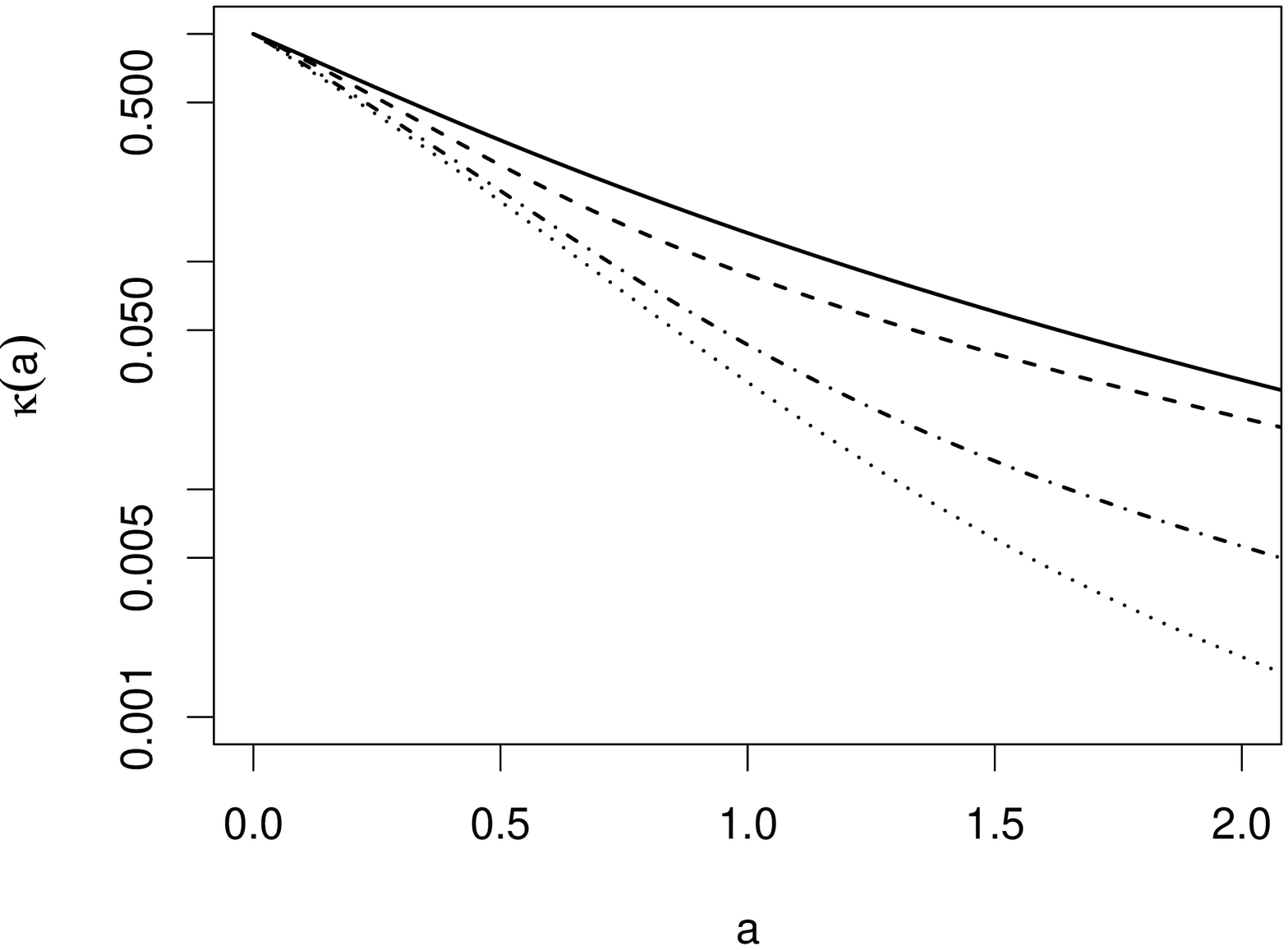}
		\includegraphics[width=7cm]{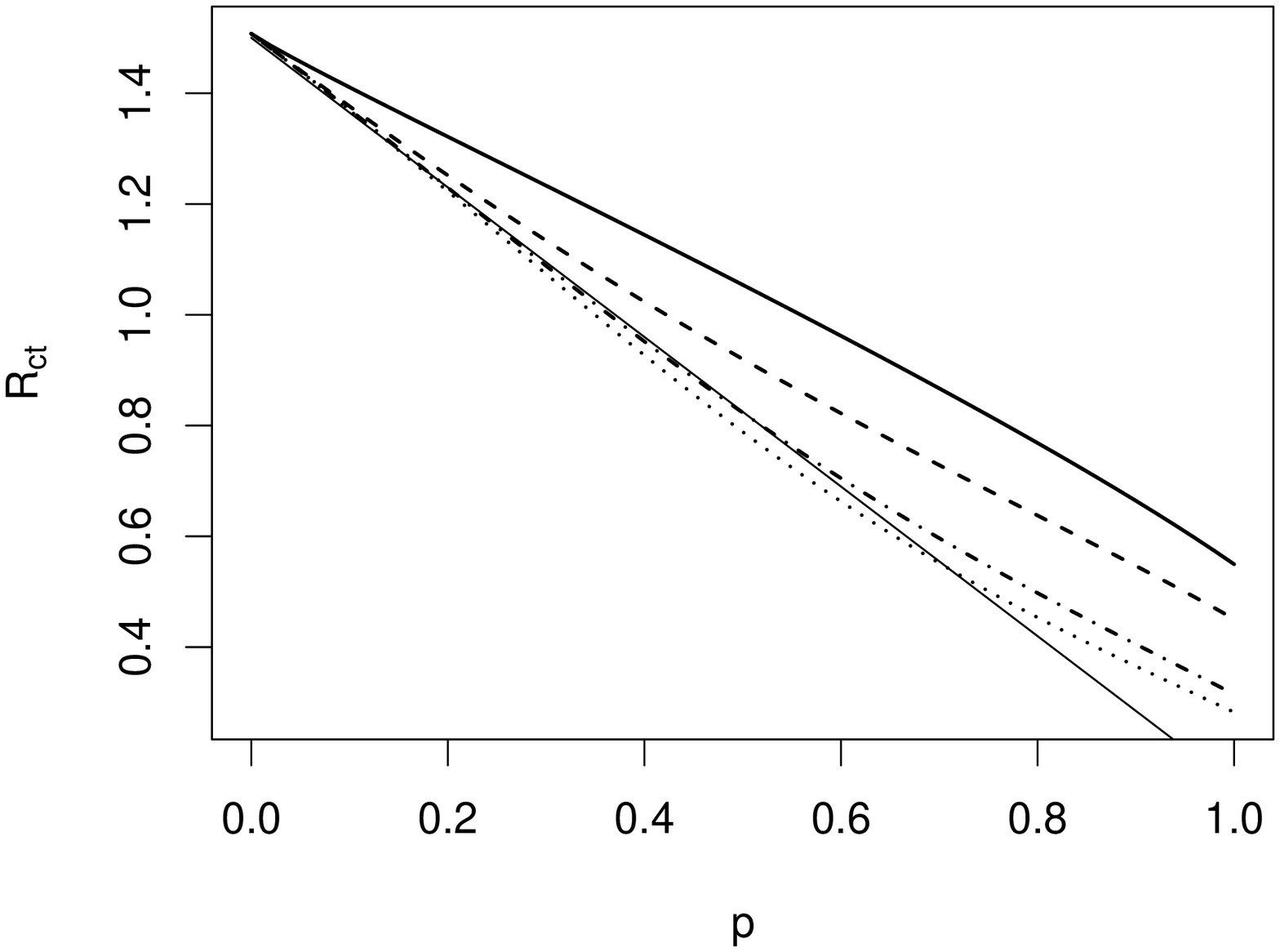}
	\end{center}
	\caption{Influence of the distribution of $K$. Left panel: $\kappa_{r,5}(a)$ (in semi-logarithmic representation), right panel: $R_{ct}$ over p. 
		Curves, from the top down: solid, $K$ with large second moment (see text, variance  $4064$); dashes, geometric distribution (variance 23.11); dashed-dotted,  Poisson distribution (variance 4.3); dotted, $K=$ constant (variance 0). Additional (thin, solid) line in the right panel: First order approximation of $R_0$. 
		Parameters:	$E(K)=4.3$, $\beta=1.5$, $\alpha=\sigma=0.5$. }
	\label{treeVar}
\end{figure}

A large or even infinite second moment, however, will lead to different results. To exemplify this conjecture, we consider four different random variables $K$: (a) $K$ is constant, (b) $K$ is Poisson distributed, (c) $K$ geometric distributed, and (d) $K$ follows a truncated power law, 
$P(K=i)=c\, i^{-\gamma}$ for $i\in\{1,\ldots, 15000\}$, and $P(K=i)=0$ else, where we choose $\gamma=2.1$. Clearly if we do not restrict ourselves to a $K \leq 15000$, but extend the distribution for all positive integers, the expectation is finite, while the variance is not. For numerical reasons, we exclude in this last case the large numbers of $K$ and truncate the power law.\\ 
In the simulations, we adapt the parameters such that the expectations of the four distributions are identical, while the variance is increasing from (a) to (d) -- see caption of Fig.~\ref{treeVar} for the exact numbers. 
We find indeed that the effect of contact tracing decreases with the variance of $K$ (Fig.~\ref{treeVar}). Moreover, the first order 
approximation for $R_0$ is only appropriate if the variance is not too large; even for the geometric distribution, there is some reasonable deviation for $p>0.3$, while for the distribution (d), even small values of $p$ show a different behaviour. Nevertheless, at $p=0$, approximation and exact result are tangential to each other, as predicted by the theory. 
Our finding bears some similarity with the observation that the reproduction number monotonously depends on the variance of the inter-generation time~\citep{Wallinga2006}. In that case, the larger the variance the smaller the reproduction number.

\section{Mean field}
So far, we did mainly focus on single individuals. The fate of an individual is well characterized by the infectious interval, described by $\kappa(a)$. Now we turn to the population. 
Preliminary studies (\cite{muller2000contact,muller2016effect}) did show that in homogeneous populations, the local correlations caused by contact tracing (removal of several neighbouring infected individuals at the same time) are less important than the reduction of the infectious period. We also test this observation in the present setting.\\
The situation we did analyse is a tree, where initially the root is infected. 
We develop an idea about the dynamics of the prevalence in the next section. In the section thereafter, we extend our ideas by means of heuristic arguments to the configuration model

\subsection{Mean field approximation on a tree}

The dynamics of an SIR model without contact tracing on a tree is well known \citep{Diekmann1998,keeling2005networks}. We briefly sketch the idea of the analysis. 
Let $I_t$, $R_t$ the number of infected/recovered nodes at time $t$. Clearly,
$$\frac d {dt} E(R_t) = \gamma E(I_t).$$
It remains to focus on $I$. Consider an infected individual at age of infection $a$ and $k$ downstream nodes. Furthermore, let $z_k(a)$ denote the expected number of secondary infecteds. Clearly, $z_k(0)=0$, $z_k(a)\leq k$. We have 
$$\frac d {da}z_k(a) = \, \beta\, (k-z_k(a)),\qquad z_k(0)=0.$$
Hence, $z_k(a) = k\left(1-e^{-\beta a}\right)$.
If we want to know the number of children of a randomly selected individuals that became infected $a$ time units ago, 
we need to remove the information about $K$, 
$$z(a) = \sum_{k=0}^\infty k\left(1-e^{-\beta a}\right)\, P(K=k) = E(K)\, \left(1-e^{-\beta a }\right).
$$
Hence the rate $\theta(a)$ at which an infected individual produces downstream infecteds is given by 
$$\theta(a) = \frac d {da}z(a) =E(K)\, \beta\,e^{-\beta a}.
$$
The last term, $\beta e^{-\beta a}$ 
refers to the time distribution of the first infectious event, while $E(K)$ is the average number of nodes that can be infected. Using these considerations, we are able to set up an age-structured model for the density of infected individuals, 
\begin{eqnarray*}
(\partial_t+\partial_a) i(t,a) &=& -\gamma i(t,a)\\
i(t,0) &=& \int_0^\infty \theta(a)\, i(t,a)\, da.
\end{eqnarray*}
A further, interesting subpopulation is the number of missed downstream neighbours. If the upstream neighbour of a susceptible is recovered, that individual escaped the infection. 
Let $\Sigma(t)$ be the expected number of these individual. 
We find 
\begin{eqnarray}
(\partial_t+\partial_a) i(t,a) &=& -\gamma i(t,a)\label{treeMeanMod1}\\
i(t,0) &=& \int_0^\infty \theta(a)\, i(t,a)\, da \label{treeMeanMod2}\\
\frac d {dt} R &=& \gamma \int_0^\infty i(t,a)\, da\label{treeMeanMod3}\\
\frac d {dt} \Sigma &=& \gamma \int_0^\infty i(t,a)\,(E(K)-z(a))\, da
=  E(K)\,
\gamma \int_0^\infty i(t,a)\,e^{-\beta a}\, da\label{treeMeanMod4}
\end{eqnarray}
As usual, this age structured model will tend to an exponential growing solution with a stable age structure,
$$i(t,a) = I_0\,e^{\lambda t}\, i(a).$$ 
Therewith,
$$ i'(a) = -(\gamma+\lambda) i(a)
\quad\Rightarrow\quad
i(a) = e^{-(\gamma+\lambda)\, a}.
$$
The boundary condition for $i(t,0)$ yields 
$$ 1 
= \int_0^\infty \theta(a)\, e^{-(\lambda+\gamma)a} \, da
=  E(K)\,\beta\,\, \int_0^\infty \,\,e^{-(\lambda+\gamma+\beta)a} \, da
=
\frac{\beta E(K)}{\lambda+\gamma+\beta}
.
$$
Solving this expression for $\lambda$, we have 
$$\lambda = \beta (E(K)-1)-\gamma.$$
In the long run, $i(t,a) = I(t)\,\lambda\, e^{-(\lambda+\gamma) \, a}$ and hence 
\begin{eqnarray*}
	\frac d {dt} \Sigma &=&   E(K)\,
	\gamma I\,\int_0^\infty\,(\lambda+\gamma)\,e^{(-\lambda-\gamma-\beta) a}\, da\nonumber\\
	&=& E(K)\,\frac{\gamma+\lambda}{\lambda+\gamma +\beta}\, \gamma\,I
	= E(K)\, \frac{ \beta (E(K)-1)}{\beta E(K)}\, \gamma\,I
	= (E(K)-1)\, \gamma\,I
\end{eqnarray*}

\begin{cor} In the long run, we find 
	\begin{eqnarray}
	\frac d {dt} I &=& \beta (E(K)-1)\, I-\gamma I\\
	\frac d {dt} R &=& \gamma I\\
	\frac d {dt} \Sigma
	&=&  (E(K)-1)\, \gamma\,I
	\end{eqnarray}
\end{cor}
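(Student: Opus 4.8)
The plan is to recognise the Corollary as the collection of the large-time (synchronised exponential growth) reductions of the McKendrick-type system (\ref{treeMeanMod1})--(\ref{treeMeanMod4}) to a closed set of ODEs for the aggregated quantities $I(t)=\int_0^\infty i(t,a)\,da$, $R(t)$ and $\Sigma(t)$. All the analytic ingredients have already been assembled above: the infection kernel $\theta(a)=E(K)\,\beta\,e^{-\beta a}$, the separated stable age profile $i(a)=e^{-(\gamma+\lambda)a}$, and the Malthusian parameter $\lambda=\beta(E(K)-1)-\gamma$ obtained from the characteristic equation $1=\int_0^\infty\theta(a)e^{-(\lambda+\gamma)a}\,da$. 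What remains is to feed these into the balance laws and evaluate elementary exponential integrals.

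First I would derive the $I$-equation. Integrating the transport equation (\ref{treeMeanMod1}) over $a\in(0,\infty)$, the flux term integrates to $i(t,0)$ by the fundamental theorem of calculus (using $i(t,\infty)=0$), and the boundary condition (\ref{treeMeanMod2}) identifies this with the renewal integral:
\begin{equation*}
\frac{d}{dt}I = i(t,0)-\gamma I = \int_0^\infty \theta(a)\,i(t,a)\,da-\gamma I.
\end{equation*}
In the long run the solution synchronises to the stable age structure which, normalised so that $\int_0^\infty i(t,a)\,da=I(t)$, reads $i(t,a)=(\gamma+\lambda)\,I(t)\,e^{-(\gamma+\lambda)a}$ (note the correct prefactor $\gamma+\lambda$). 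Substituting this together with $\theta(a)=E(K)\beta e^{-\beta a}$ into the renewal integral, and using the two identities $\gamma+\lambda=\beta(E(K)-1)$ and $\gamma+\lambda+\beta=\beta E(K)$, yields $i(t,0)=\beta(E(K)-1)\,I$, so that $\frac{d}{dt}I=\beta(E(K)-1)I-\gamma I$. Equivalently, since $I(t)\propto e^{\lambda t}$ in this regime, $\frac{d}{dt}I=\lambda I$, which is the same statement.

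The remaining two equations are then immediate. The $R$-balance is exactly (\ref{treeMeanMod3}), namely $\frac{d}{dt}R=\gamma\int_0^\infty i(t,a)\,da=\gamma I$. For $\Sigma$ I would insert the same stable profile into (\ref{treeMeanMod4}),
\begin{equation*}
\frac{d}{dt}\Sigma = E(K)\,\gamma\int_0^\infty i(t,a)\,e^{-\beta a}\,da = E(K)\,\gamma\,(\gamma+\lambda)\,I\int_0^\infty e^{-(\gamma+\lambda+\beta)a}\,da = E(K)\,\gamma\,I\,\frac{\gamma+\lambda}{\gamma+\lambda+\beta},
\end{equation*}
and the ratio collapses to $(E(K)-1)/E(K)$ by the same two identities, giving $\frac{d}{dt}\Sigma=(E(K)-1)\gamma I$.

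The only genuinely non-routine point, which I would flag rather than belabour, is the passage to the synchronised exponential solution: one must know that the linear age-structured renewal problem (\ref{treeMeanMod1})--(\ref{treeMeanMod2}) possesses a dominant real eigenvalue and that a general solution converges to the corresponding separated form $i(t,a)=I_0 e^{\lambda t}i(a)$. This is the classical Sharpe--Lotka/McKendrick result: Laplace transforming the boundary condition gives the characteristic equation $1=\beta E(K)/(\lambda+\gamma+\beta)$, whose unique real root is $\lambda=\beta(E(K)-1)-\gamma$, and standard Perron--Frobenius-type arguments for such positive kernels give strict dominance and hence the asymptotic synchrony. Once this is granted, everything above is bookkeeping with exponential integrals, so I expect no further difficulty.
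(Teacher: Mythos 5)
Your proposal is correct and follows essentially the same route as the paper: insert the stable age profile $i(t,a)=(\gamma+\lambda)I(t)e^{-(\gamma+\lambda)a}$ into the renewal boundary term and into the $\Sigma$-balance, then simplify using $\gamma+\lambda=\beta(E(K)-1)$ and $\gamma+\lambda+\beta=\beta E(K)$. Your explicit integration of the transport equation, the flag on the Sharpe--Lotka dominance step, and the corrected normalisation prefactor $\gamma+\lambda$ (the paper writes $\lambda$ in the displayed profile but uses $\gamma+\lambda$ in its own computation) are all consistent with, and slightly more careful than, the paper's argument.
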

A short computation shows that, in the long run, the expected number of susceptible downstream nodes of a randomly selected infected node is $E(K)-1$ ~\citep{Diekmann1998,keeling2005networks}. Hence, the number of infected grow exponentially fast, where the incidence is $\beta(E(K)-1)$, and the recovery rate $\gamma$, such that the exponent reads $\lambda = \beta (E(K)-1)-\gamma$. 
\par\medskip 
We compute the asymptotic ratios
$\lim_{t\rightarrow\infty} \frac{\Sigma(t)}{I(t)}$, 
$\lim_{t\rightarrow\infty} \frac{R(t)}{I(t)}$. 
With $\zeta(t) = \Sigma(t)/I(t)$ and $\eta(t)=R(t)/I(t)$, 
we find
\begin{eqnarray*}
	\zeta'(t) &=& 
	\frac{\Sigma'(t)}{I(t)}- \frac{\Sigma(t)}{I(t)}\, \frac{I'(t)}{I(t)}
	=
	(E(K)-1)\gamma-  \lambda\,\zeta(t)\\
	\eta'(t) &=& 
	\frac{R'(t)}{I(t)}- \frac{R(t)}{I(t)}\, \frac{I'(t)}{I(t)}
	=
	\gamma - \lambda \,\eta(t)
\end{eqnarray*}
and hence
\begin{cor}
	$$\lim_{t\rightarrow\infty} \frac{\Sigma(t)}{I(t)}
	=  (E(K)-1)\frac{\gamma}{\lambda}
	=(E(K)-1)\frac{\gamma}{\beta(E(K)-1)-\gamma},
	\qquad
	\lim_{t\rightarrow\infty} \frac{R(t)}{I(t)}
	=\frac{\gamma}{\beta(E(K)-1)-\gamma}.
	$$
\end{cor}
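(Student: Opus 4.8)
The plan is to treat the two ODEs for $\zeta(t) = \Sigma(t)/I(t)$ and $\eta(t) = R(t)/I(t)$, which have already been derived just above, as scalar linear inhomogeneous equations and read off their limits directly. Each has the form $x'(t) = c - \lambda x(t)$ with a constant forcing $c$: for $\zeta$ we have $c = (E(K)-1)\gamma$, and for $\eta$ we have $c = \gamma$; in both cases the decay rate is the growth exponent $\lambda = \beta(E(K)-1) - \gamma$ established earlier.

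First I would record the explicit solution of $x'(t) = c - \lambda x(t)$, namely
\begin{equation*}
x(t) = \frac{c}{\lambda} + \left(x(0) - \frac{c}{\lambda}\right) e^{-\lambda t}.
\end{equation*}
Provided we are in the growing (supercritical) regime, $\lambda = \beta(E(K)-1) - \gamma > 0$, the transient term $e^{-\lambda t}$ vanishes as $t \to \infty$, so that $x(t) \to c/\lambda$ independently of the initial datum $x(0)$. Next I would substitute the two values of $c$: for $\zeta$ this gives $\lim_{t\to\infty}\zeta(t) = (E(K)-1)\gamma/\lambda$, and for $\eta$ it gives $\lim_{t\to\infty}\eta(t) = \gamma/\lambda$. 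Replacing $\lambda$ by $\beta(E(K)-1)-\gamma$ then yields exactly the two formulas in the statement.

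The only genuine subtlety — and the point I would flag most carefully — is the requirement $\lambda > 0$, i.e.\ $\beta(E(K)-1) > \gamma$. This is precisely the condition for exponential growth, and it is already built into the preceding ``in the long run'' analysis, where the asymptotic profile $i(t,a) = I(t)\,\lambda\,e^{-(\lambda+\gamma)a}$ was used to close the equation for $\Sigma$. Accordingly, the corollary should be understood as a statement about the limiting autonomous system once the stable age structure has been reached; under that hypothesis the convergence of both ratios is immediate from the scalar ODEs, and no further estimates are needed. I do not expect any hard analytic obstacle here, since all the nontrivial work (deriving the closed ODEs for $I$, $R$, $\Sigma$, and hence for $\zeta$ and $\eta$) has already been carried out above.
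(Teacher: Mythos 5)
Your proposal is correct and follows essentially the same route as the paper: the paper derives the same two linear ODEs $\zeta' = (E(K)-1)\gamma - \lambda\zeta$ and $\eta' = \gamma - \lambda\eta$ and then reads off the equilibria $c/\lambda$, which is exactly what you do, only with the explicit solution $x(t) = c/\lambda + (x(0)-c/\lambda)e^{-\lambda t}$ written out. Your explicit flagging of the supercriticality condition $\lambda = \beta(E(K)-1)-\gamma > 0$ is a point the paper leaves tacit, and is a worthwhile addition rather than a deviation.
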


In order to incorporate contact tracing, we replace in eqn.~(\ref{treeMeanMod1})-(\ref{treeMeanMod2}), the constant removal rate $\gamma$ by the age structured non-constant hazard rate $-\kappa'(a)/\kappa(a)$, where $\kappa(a)$ is the probability to be infectious at age of infection $a$ (in case of forward/full contact tracing, the asymptotic probability, if we take the number of generations to $\infty$). Using the first order approximation of $\kappa(a)$ in eqn. \ref{fullRecApprox}, that is $\kappa_{r,i}(a)$, the age-structured model now becomes
\begin{eqnarray*}
(\partial_t+\partial_a) i(t,a) &=& log(\kappa(a))' i(t,a)\\
i(t,0) &=& \int_0^\infty \theta(a)\, i(t,a)\, da.
\end{eqnarray*}
Asymptotically, $i(t,a) = I_{0}\,e^{\lambda t}\,i(a)$, where 
$$i'(a) = (log(\kappa(a))' - \lambda)\,i(a).$$
Thus $$i(a) = i(0)\,\kappa(a)\,e^{-\lambda a}$$
and the equation that determines $\lambda$ becomes
$$ 1 = \int_0^\infty  {\theta (a)\,\kappa (a)\,{e^{ - \lambda a}}\,da}.$$

A short computation of the asymptotic ratios which incorporates contact tracing yields

\begin{cor}
\[\mathop {\lim }\limits_{t \to \infty } \frac{{\Sigma (t)}}{{I(t)}} = \frac{{E(K)\left( {1 - (\lambda  + \beta )\int_0^\infty  {\kappa (a){e^{ - (\lambda  + \beta )a}}}\,da} \right)}}{{\lambda \int_0^\infty  {\kappa (a){e^{ - \lambda a}}}\,da}},
\qquad
\mathop {\lim }\limits_{t \to \infty } \frac{{R(t)}}{{I(t)}} = \frac{1}{{\lambda \int_0^\infty  {\kappa (a){e^{ - \lambda a}}}\,da}} - 1.\]
\end{cor}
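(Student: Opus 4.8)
The plan is to mirror the argument that produced the analogous corollary for the untraced system, the only change being that the constant removal rate $\gamma$ is everywhere replaced by the age-dependent hazard $-\kappa'(a)/\kappa(a)$, exactly the substitution already prescribed for the $i$-equation. Accordingly, I would start from the traced counterparts of the balance laws~(\ref{treeMeanMod3})--(\ref{treeMeanMod4}), obtained by carrying $\gamma$ inside the integrals as the hazard,
\[
\frac{d}{dt} R = \int_0^\infty \left(-\frac{\kappa'(a)}{\kappa(a)}\right) i(t,a)\,da,
\qquad
\frac{d}{dt}\Sigma = E(K)\int_0^\infty \left(-\frac{\kappa'(a)}{\kappa(a)}\right) i(t,a)\,e^{-\beta a}\,da,
\]
where I have used $E(K)-z(a)=E(K)e^{-\beta a}$. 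Into both I substitute the stable-age profile $i(t,a)=I_0 e^{\lambda t} i(0)\,\kappa(a)\,e^{-\lambda a}$ derived above, so that the $\kappa(a)$ in each numerator cancels the $\kappa(a)$ in the hazard.

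After cancellation the integrands collapse to $-\kappa'(a)e^{-\lambda a}$ and $-E(K)\kappa'(a)e^{-(\lambda+\beta)a}$, and the one genuinely computational step is the integration-by-parts identity
\[
\int_0^\infty \bigl(-\kappa'(a)\bigr)e^{-\mu a}\,da
= \Bigl[-\kappa(a)e^{-\mu a}\Bigr]_0^\infty - \mu\int_0^\infty \kappa(a)e^{-\mu a}\,da
= 1 - \mu\int_0^\infty \kappa(a)e^{-\mu a}\,da,
\]
applied for $\mu=\lambda$ (in $R$) and $\mu=\lambda+\beta$ (in $\Sigma$). Here I use $\kappa(0)=1$ and the decay $\kappa(a)e^{-\mu a}\to 0$, which holds because $\kappa(a)\le\widehat\kappa(a)=e^{-(\alpha+\sigma)a}$. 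This renders $\tfrac{d}{dt}R$ and $\tfrac{d}{dt}\Sigma$ as explicit multiples of $e^{\lambda t}$, while $I(t)=I_0 i(0)e^{\lambda t}\int_0^\infty \kappa(a)e^{-\lambda a}\,da$ so that $I'(t)=\lambda I(t)$.

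Finally, reusing the device of the untraced corollary, I set $\eta=R/I$, $\zeta=\Sigma/I$ and form
\[
\eta'(t)=\frac{R'(t)}{I(t)}-\lambda\,\eta(t),
\qquad
\zeta'(t)=\frac{\Sigma'(t)}{I(t)}-\lambda\,\zeta(t).
\]
In the stable-age regime the forcing terms $R'/I$ and $\Sigma'/I$ are constant, so each equation is of the form $x'=c-\lambda x$ with equilibrium $c/\lambda$, approached as $t\to\infty$ since $\lambda>0$ in the growth phase; reading off the two equilibria reproduces exactly the asserted limits. The only non-routine ingredient is the standing assumption that the age-structured model~(\ref{treeMeanMod1})--(\ref{treeMeanMod2}) actually relaxes onto the persistent exponential solution with exponent $\lambda$ fixed by $1=\int_0^\infty\theta(a)\kappa(a)e^{-\lambda a}\,da$. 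I would invoke the standard Sharpe--Lotka / renewal-theory asymptotics here rather than reprove it, noting that this is precisely the ``as usual'' step already taken for the untraced system; everything after it is the elementary cancellation, one integration by parts, and a linear fixed-point computation.
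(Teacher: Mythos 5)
Your proposal is correct and is exactly the ``short computation'' the paper leaves implicit: it substitutes the hazard $-\kappa'(a)/\kappa(a)$ into the untraced balance laws for $R$ and $\Sigma$, inserts the stable-age profile $i(a)=i(0)\kappa(a)e^{-\lambda a}$, integrates by parts using $\kappa(0)=1$ and $\kappa(a)\le e^{-(\alpha+\sigma)a}$, and reuses the $\eta'=c-\lambda\eta$, $\zeta'=c-\lambda\zeta$ fixed-point device from the untraced corollary. The checks you add (decay of the boundary term, $\lambda>0$ for convergence, appeal to renewal asymptotics for the stable-age regime) are consistent with, and slightly more careful than, the paper's treatment.
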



For different tracing probabilities, the effect of contact tracing on the exponent and asymptotic ratios is established, and with the age structured non-constant hazard rate, we find a satisfying agreement with simulation results (Fig. \ref{fullOneStepExponentialGrowth}).  It is not surprising that for increased $p$, the prediction for the non-age structured removal rate ($\lambda = \beta (E(K) - 1) - 1/{\int_0^\infty  {\kappa (a)da} }$) breaks down, this is because we have used a first-order approximation of $\kappa(a)$ to compute $\lambda$.

\begin{figure}[!h]
\centering
  $\vcenter{\hbox{\includegraphics[height=5cm]{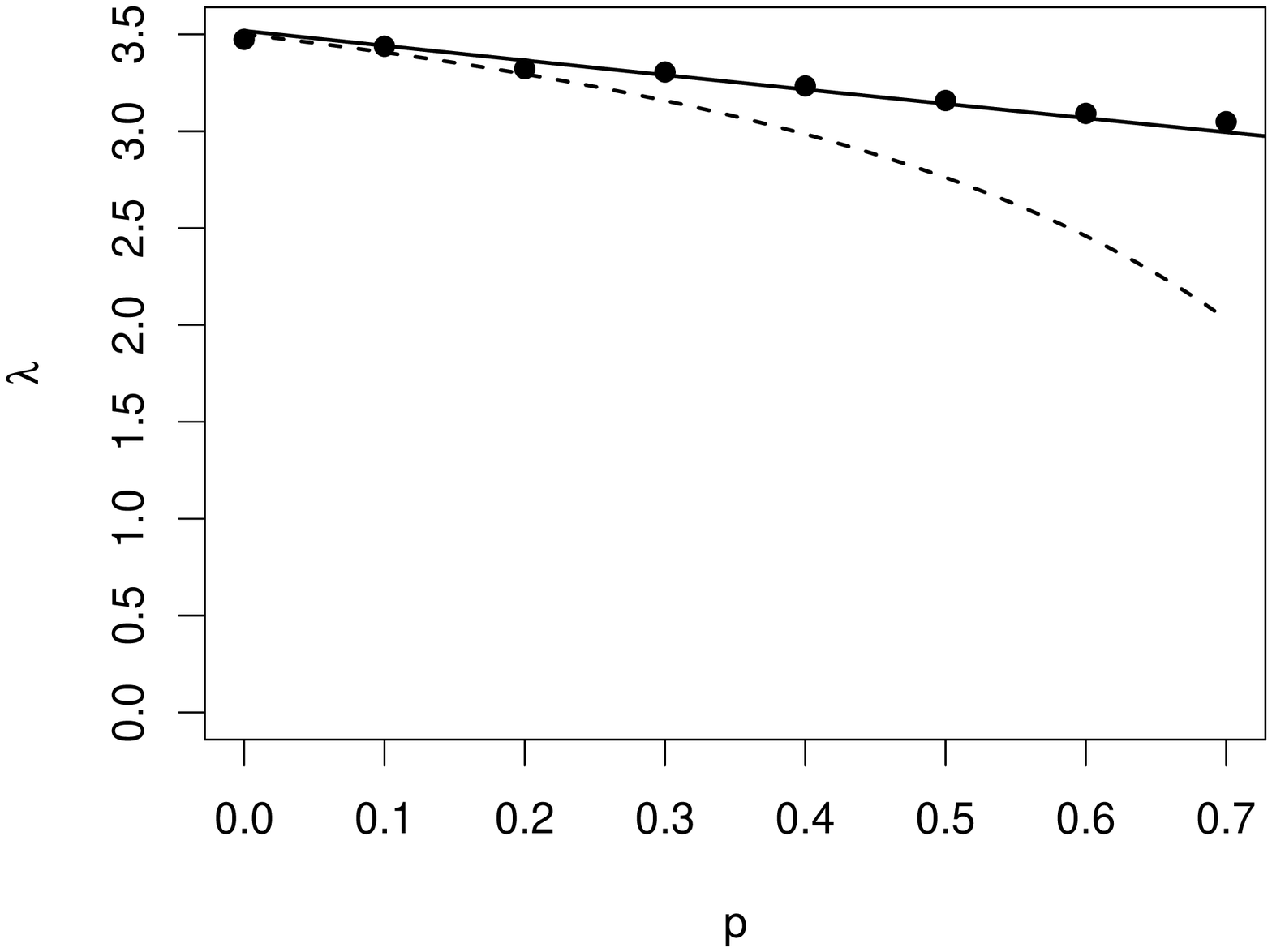}}}
  \vcenter{\hbox{\includegraphics[height=5cm]{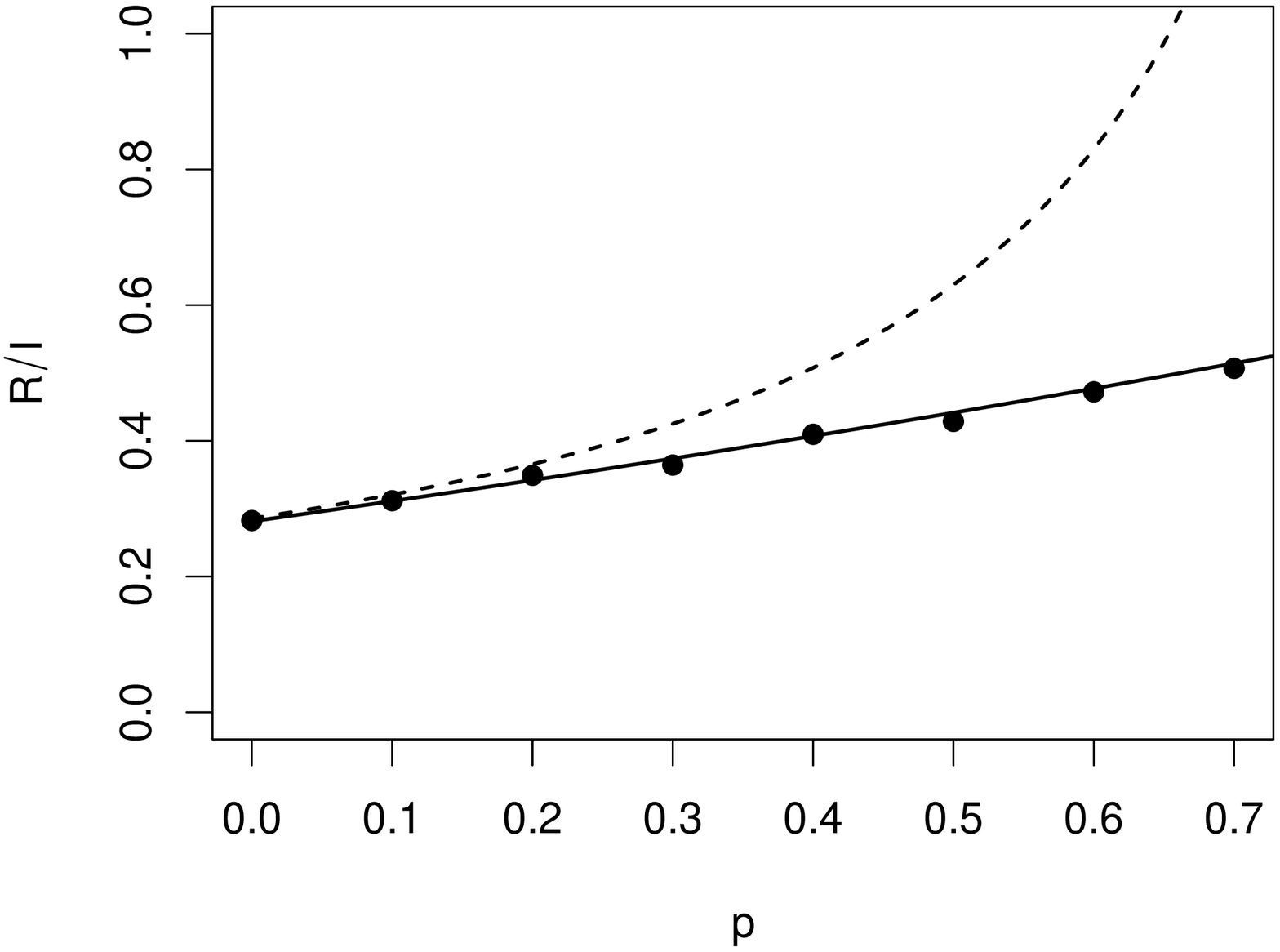}}}
  \vcenter{\hbox{\includegraphics[height=5cm]{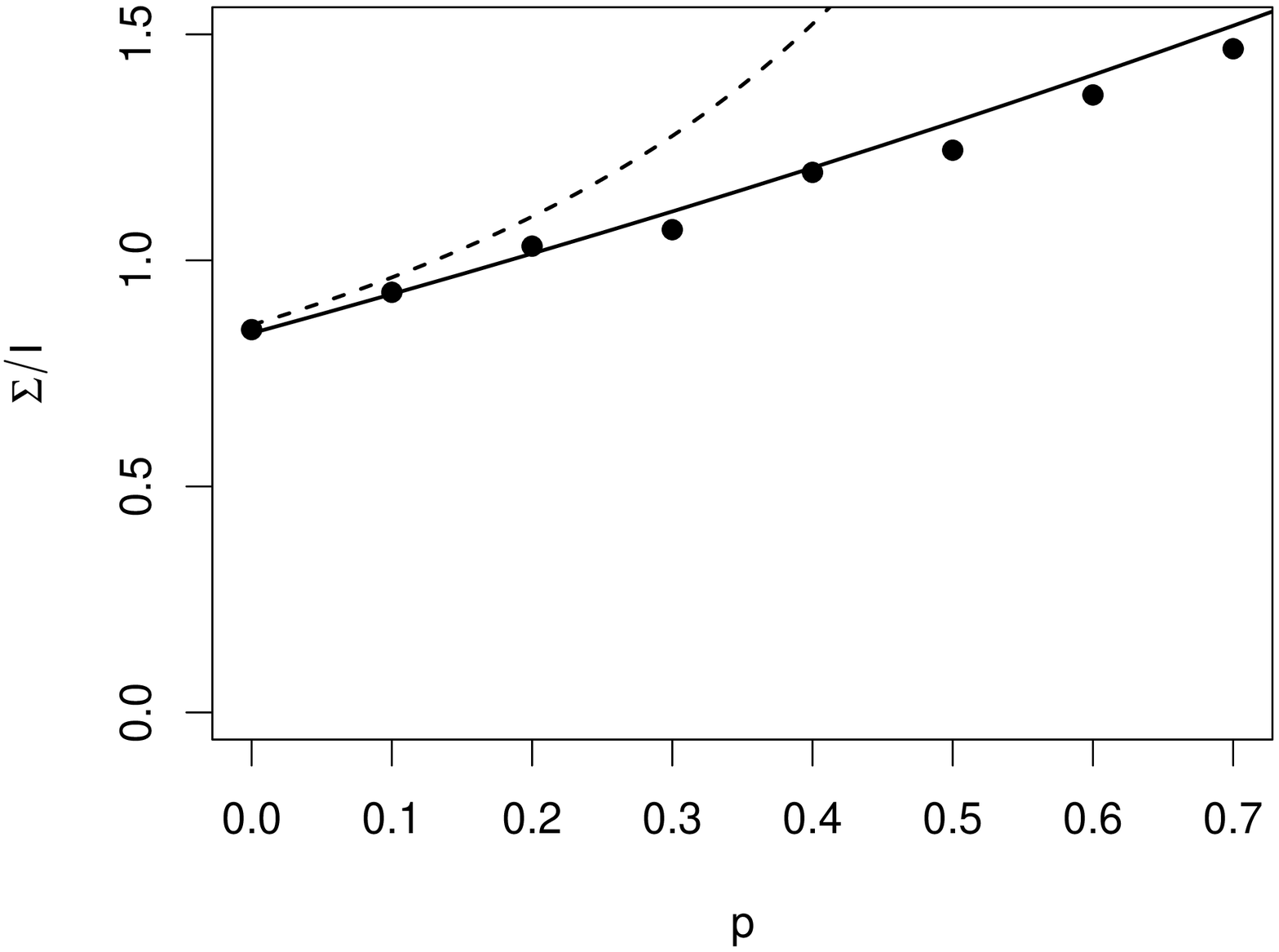}}}$
\caption{$\lambda$ (left panel), $\mathop {\lim }\limits_{t \to \infty } \frac{{R(t)}}{{I(t)}}$ (middle panel) and $\mathop {\lim }\limits_{t \to \infty } \frac{{\Sigma (t)}}{{I(t)}}$ (right panel) with different values of $p$. The bullets are the average simulating results in 50 runs, solid line is the predictions with the age structured non-constant hazard rate ($log(\kappa(a))'$) while the dashed line is the prediction with the non-age structured removal rate ($1/{\int_0^\infty  {\kappa (a)da} }$).}\label{fullOneStepExponentialGrowth}
\end{figure}


\subsection{Mean field on the configuration model}

Obviously, trees are rather mathematically convenient but no appropriate models for natural contact graphs. The configuration model (CM) is better suited ~\citep{Miller2011}. In the CM, the number of nodes $N$ and the degree distribution of nodes are given. The CM is constructed in generating for each node a random number of stubs, according to the degree distribution. Afterwards, subs are paired randomly to form edges.\\
Newman and co-workers developed a macroscopic, approximate  description of an SIR dynamics on a large configuration model, the message passing approach~\citep{karrer2010message,wilkinson2014message}. Interestingly, this approach resembles the idea used in the present work to obtain $\kappa(a)$, and is exact on trees. In the centre of that theory is the probability $H(t)$ that the infection has not been passed to a focal individual by a given edge at time $t$. Let us assume that no node is recovered at time $t=0$, and that nodes are independently and randomly assigned to be either susceptible (probability $z$)) or infected (probability $1-z$). Furthermore, $\beta$ is the rate of contacts on a given edge, and $r(a)\, da$ is the probability for removal of an infected individual at age $a$. As we lose the graph structure, we cannot speak of upstream- and downstream edges. Let $\hat K$ denote the degree distribution (all edges) of a node. 
Furthermore, $G(s)$ denotes the generating function of that degree distribution, and $G_1(s)=G'(s)/G'(1)$ is the degree distribution of a randomly chosen neighbour of a randomly chosen individual \citep{Newman2001}.
Then~\citep{karrer2010message}
\begin{eqnarray}\label{H-eqn}
H(t) = 1-\int_0^t\beta e^{-\beta a} \int_a^\infty r(\tau)\, d\tau\, (1-z G_1(H(t-a)))\, da.
\end{eqnarray}
The probabilities $P_S(t)$, ($P_I(t)$, and $P_R(t)$) for a randomly chosen individual to be susceptible (infected, recovered) at time $t$ are given by~\citep{karrer2010message}
\begin{eqnarray}
P_S(t) & = & z G(H(t))\\
\frac d{dt} P_I(t) &=&-\frac d {dt}P_S(t)-(1-z)r(t)+\int_0^t r(t-t')\frac d{dt'}P_S(t')\, dt',\qquad P_I(0)=1-z\\
P_R(t) &=& 1-P_S(t)-P_I(t).
\end{eqnarray}
For $p=0$, we are in the setting of the message passing method, where 
$$r(a)=(\alpha+\sigma)\, e^{-(\alpha+\sigma)a}.$$
And indeed, Fig.~\ref{messPassFig} (a) shows an excellent agreement of the prediction given by that method and simulations.\par\medskip

\begin{figure}[h!]
\begin{center}
(a)\includegraphics[width=7cm]{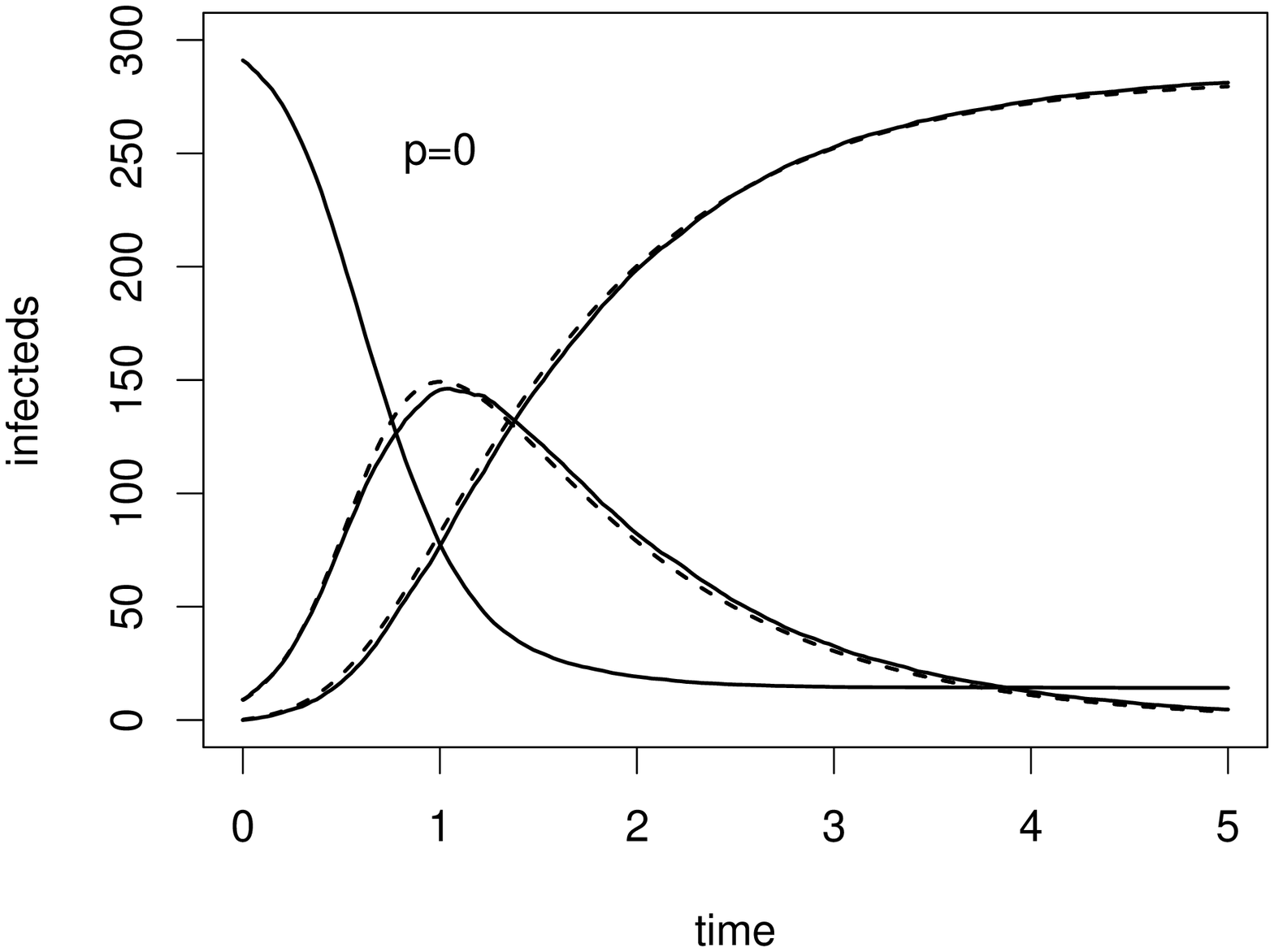}
(b)\includegraphics[width=7cm]{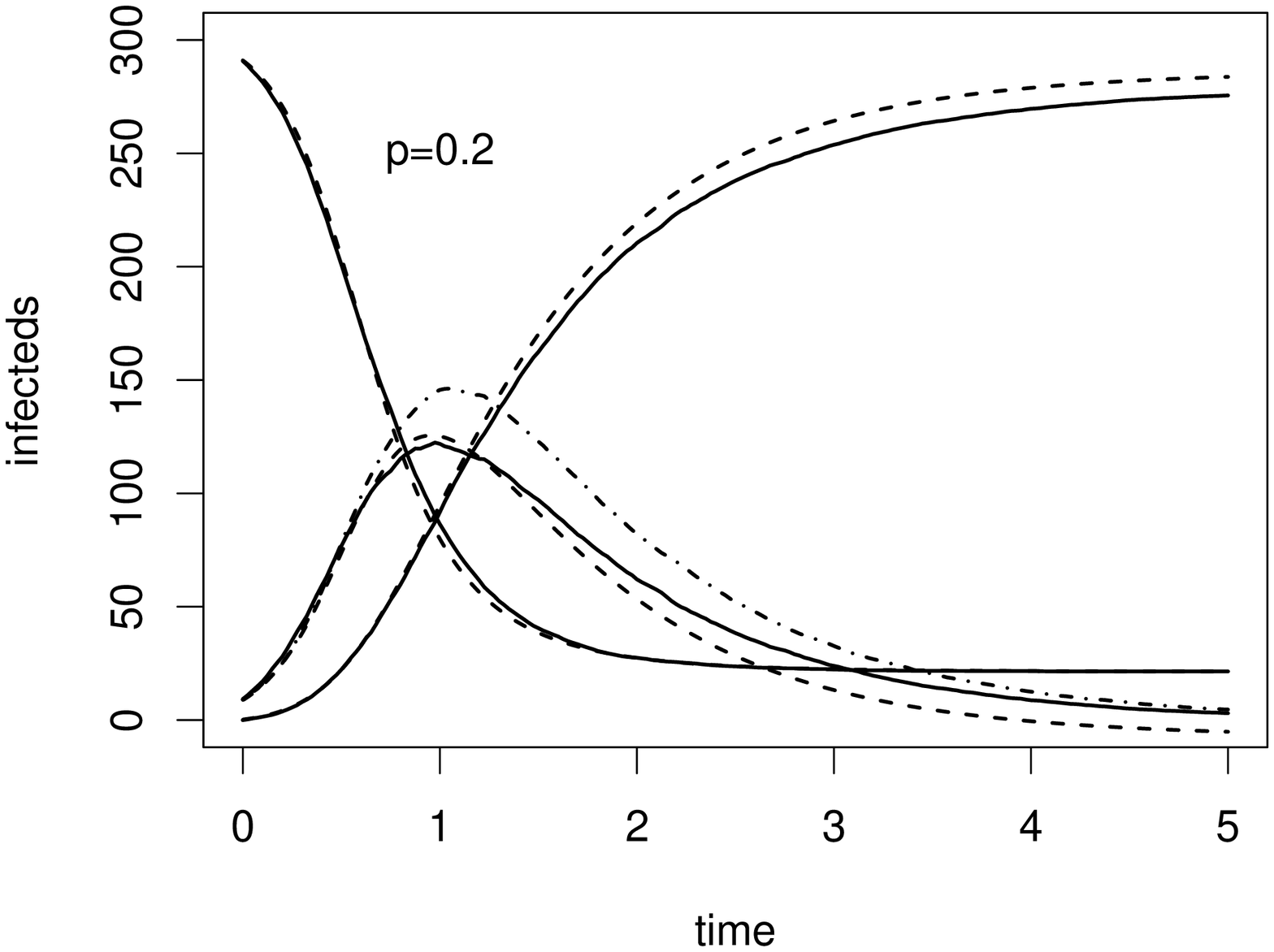}\\
(c)\includegraphics[width=7cm]{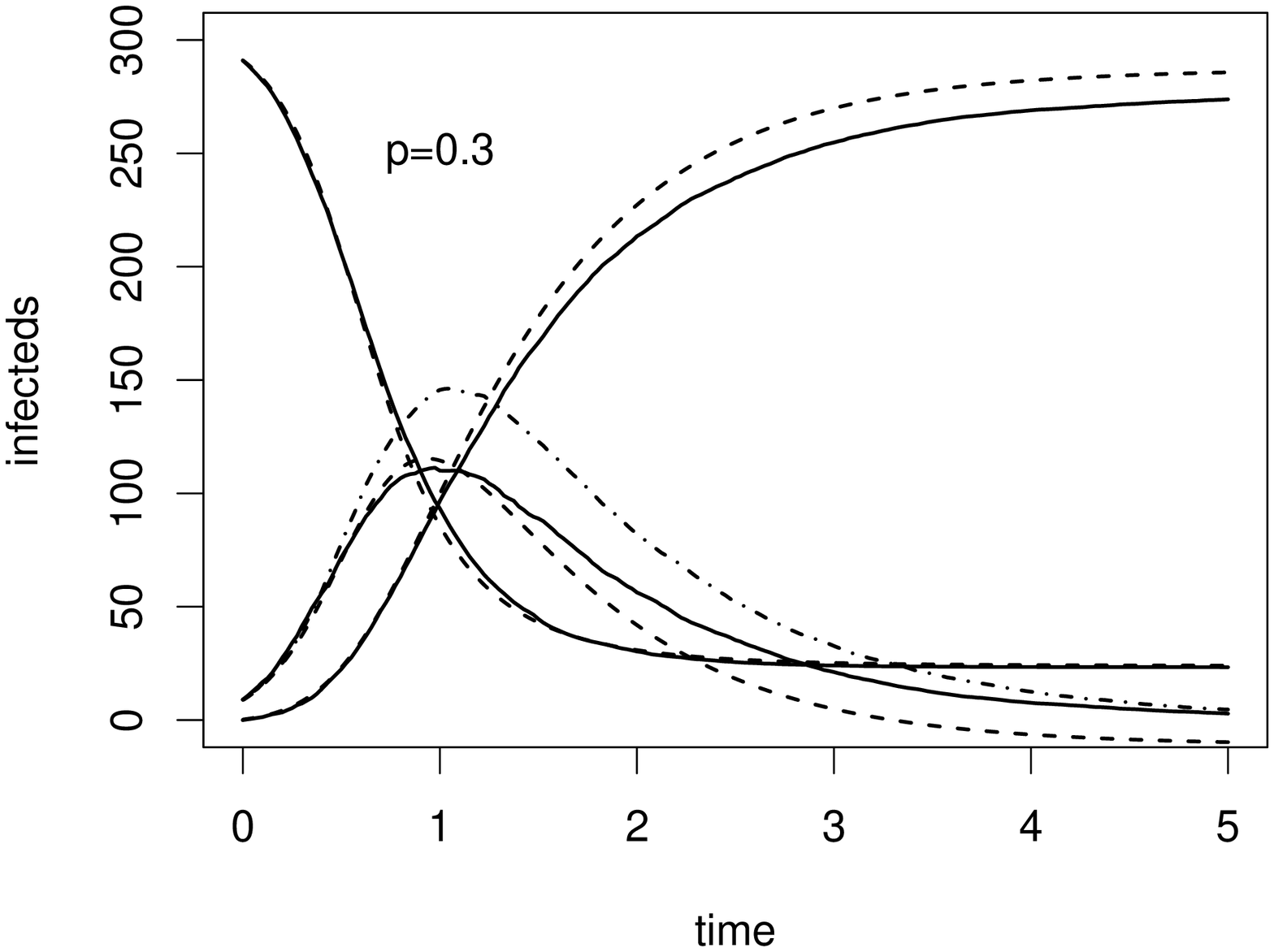}
(d)\includegraphics[width=7cm]{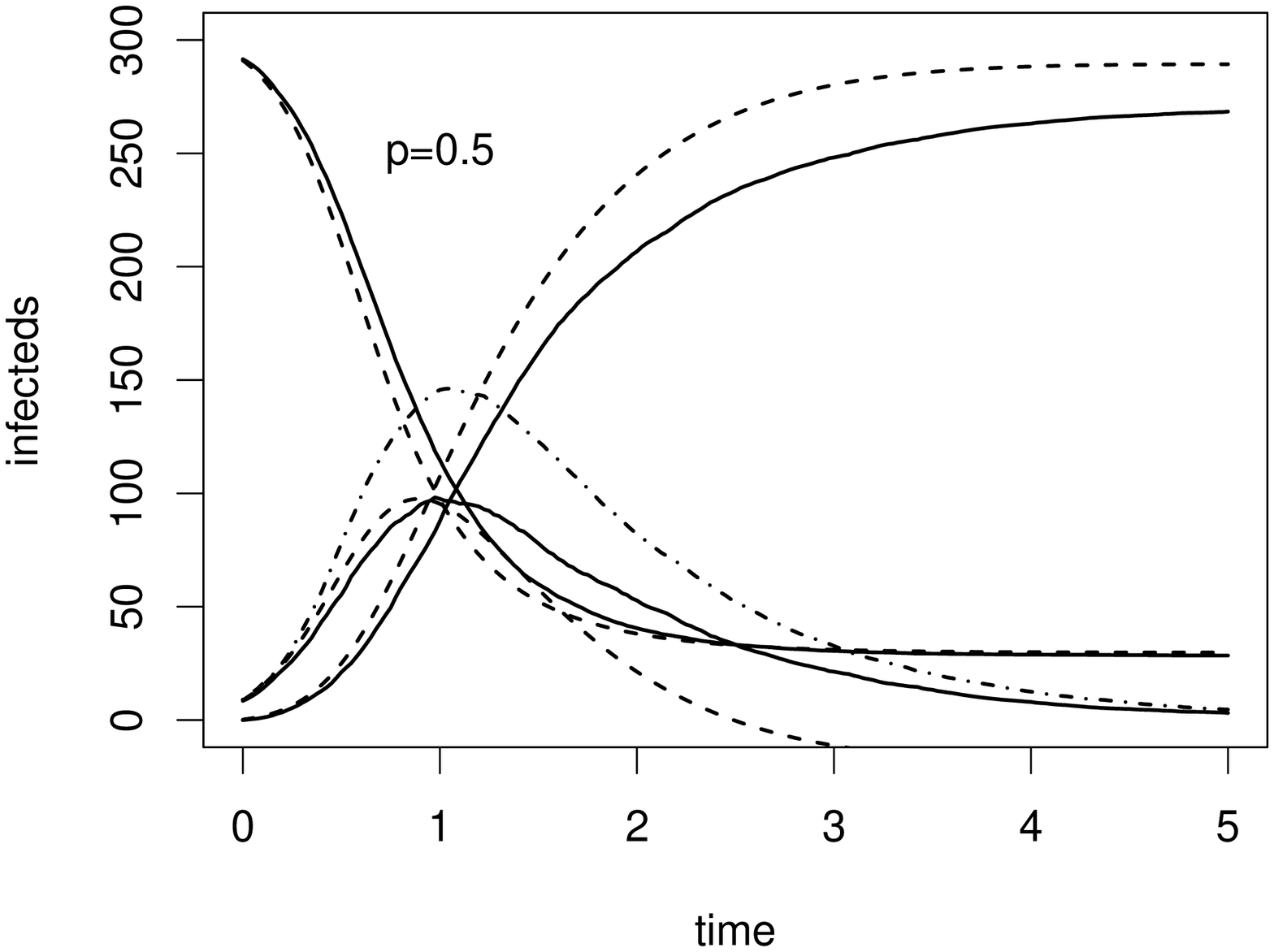}\\
\end{center}
\caption{Mean field approximation on the CM model for full tracing and different values of $p$ (as indicated in the graphs). The decreasing functions represent $S$, the increasing function $R$, and the unimodal function $I$. 
The solid lines are the average of 50 simulations, the dashed line the predictions of the message passing method for contact tracing (see text). The dashed-dotted line in panels (b)-(d) is the simulating result for $I$ and $p=0$. Parameters: $N=300$, $\beta=1.5$, $\sigma=\alpha=0.5$, Poissonian edge-distribution with $E(K)=4$. }\label{messPassFig}
\end{figure}

We incorporate contact tracing into that model in replacing $r(a)$ by $-\kappa'(a)$, where we choose the first order approximation as developed in eqn.~(\ref{fullRecApprox}) for $\kappa(a)$. We focus here on one-step tracing (see discussion below). However, there is a fundamental difference between a tree where only the root is infected and the CM: it is possible that neighbouring nodes are infected and trigger tracing events, though the focal node is neither an infectee nor infector (see also \cite{muller2000contact}). We can estimate the rate to be traced by such a contact at age of infection $a$ in multiplying the number of edges minus the edge of the infector ($E(\hat K)-1$) times the probability that no  contact happened till the present age $e^{-\beta a}$, times the probability that the contactee is infected ($P_I(t)$), times the rate of direct detection $\sigma$ times the tracing probability $p$. We obtain
\begin{eqnarray}\label{fullRecMeanfield}
\kappa(a, P_I) &:=& \hat \kappa (a)
          \bigg(1 + \frac{p\sigma (E(\hat K)-1)}{\alpha  + \sigma  - \beta} \left(e^{-\beta a} - \hat\kappa(a) \right)
           - \frac{p\sigma (E(\hat K)-1)}{\alpha  + \sigma }
          \left(1-\hat \kappa (a)\right)\\
&&\qquad\qquad\qquad  
     - \frac{p\sigma }{\alpha+\sigma}\left(1 - \hat \kappa (a)
	 \right)
 -p\,(E(\hat K)-1)\sigma P_I(t) \int_0^ae^{-\beta a'}\, da'
  \bigg).\nonumber
\end{eqnarray}
We assume that the infectious period of an individual is shorter than the time scale at which $P_I$ does change. Therewith, we replace in eqn.~(\ref{H-eqn}) $r(\tau)$ by $-\frac{\partial}{\partial\tau}\kappa(\tau,P_I(t))$. \\ 
We find for small values of $p$ a reasonable approximation (Fig.~\ref{messPassFig} (b), (c); however for $p=0.5$ and larger the residuals become notable (Fig.~\ref{messPassFig} (d)). As we only use a first-order approximation as the basis for our definition of $\kappa(a)$, we cannot expect a better result. \\
For recursive tracing, a reasonable approximation is more involving, as tracing can be triggered by individual in a distance of several steps (in the graph metric). The correlations between individuals, as well as the dependency on the background prevalence becomes more involving to control.

In principle, the message passing method is equivalent with some variation of the pair approximation~\cite{sharkey2013kj,wilkinson2014message,kiss2015generalization}. As the pair approximation is the second, frequently used approach to describe contact tracing, we have here a bridge between the present approach and the pair approximation approach to contact tracing.

\section{Discussion}

The present paper extends the existing branching-process-theory for contact tracing in a randomly mixing population to a situation where contacts are only allowed along a prescribed, non-trivial contact graph. We restricted ourselves to an SIR model, while in homogeneous populations also SIS models can be handled. The central difference between the two situations is that in homogeneous models with large populations repeated contacts between the same individuals during the infectious period can be neglected, while they do occur most likely in graph models. The analytical machinery we did use cannot handle reinfection by an infectee, and therefore we did focus on the SIR model. Herein, the mathematical technique resembles that of the message passing method, which does also 
only apply to SIR models.\par\medskip 

For the exact results in the present paper, the contact graph is chosen to be a (stochastic) tree. 
While the tree of infecteds is dynamically constructed in homogeneous models (at a given time point, the nodes are the infecteds, and a directed edge goes from infector to infectee), in the present case we focus on the tree of possible contacts, which is prescribed and static. In the present case, the tree of infecteds is a subtree of the contact tree. In order to better understand the relation between the different approaches, we clarify (some of) the time scales involved.\\ 
In particular, for real-world populations, we identify four relevant time scales: (a) The time scale at which the contact graph changes. 
A contact network is the abstraction of e.g.\ pair formation and sexual relations, family structure, or contacts at work. All these relations are likely to change on a slow time scale (children may grow up and leave the core family, a person changes his/her job position etc.). (b) The time scale at which a given edge within the contact graph is activated. Of course, there are different contacts with different intensities. The time scale for contacts within the family may be hours or minutes, while the time scale of contacts in leisure activities as a chorus may be given by a week etc. (c)~The time scale of the duration of an epidemic outbreak, and (d)~the time scale of an individuals' infectious period.\\
Only if the time scale of an infectious outbreak (c) is much shorter than that time scale describing the change in social interactions (a), we can assume the contact graph to be static. That might not be the case for endemic diseases, or infections with a long infectious period as HIV. Also asymptomatic cases of gonorrhoea and chlamydia may have an infectious period around a year, such that a static contact graph may not be suited to cover all aspects of the disease dynamics.\\
For the difference of the contact graph and the graph of infecteds, the time scale of the infectious period (d) and that of contacts on a single edge (b) becomes interesting. If the infectious period is long in comparison with the frequency of contacts on a given edge, this edge will most likely spread the infection; there is almost no difference between the contact graph and the graph of infecteds. In the contrary case, the graph of infecteds is a distinct subgraph of the contact graph. In that case, ''contact tracing'' will be rather ''infectious contact tracing''. There is no difference for the effect of contact tracing if we focus on the (rather abstract) setting of a contact graph that is a tree, where initially only the root is infected. Any downstream contactee can be only infected by an upstream individual, and the effect of contact tracing 
and infectious contact tracing coincides. That becomes different in more complex and realistic models (e.g.\ the configuration model). Downstream individuals may get infected by another downstream individual. If we focus on contact tracing, that infected downstream individual may be detected, while in infectious contact tracing, that infected downstream infected may be missed.\\
In the view of these considerations, it is an encouraging insight, that there is a limit that shows that contact tracing and infectious contact tracing lead to similar results for the appropriate scaling of contract tree and contact rates. There is no fundamental difference between homogeneous models and contact graph models, but only a gradual one.\par\medskip 

A further interesting finding of the present work is the fact that the degree distribution of the contact graph influence the dynamics mainly via the expected number of edges (at least, if the tracing probability $p$ is small, which is given in most applications). The higher moments play only a role if they become large (or even infinite). Also in the pair approximation approach, the first moment of the edge distribution enters the equations, but not the higher moments~\citep{Keeling1997}. This result may possibly contradict the idea that the effect of contract tracing relies on the detection of super-spreaders: Super-spreader strongly influence the edge distribution of a randomly chosen neighbour of a randomly chosen node; the existence of super-spreaders, however, is expressed by the second moment of the edge distribution (the variance). As this second moment is only of minor importance for the description of the overall effect of contact tracing, it might be, that also the detection of super-spreaders is not that central. This interpretation of our results is in line with the findings in~\citep{Kiss2005,Kiss2007}, where the authors showed that contact tracing works well in assortatively and associatively mixing contact graphs. Contact tracing seems to be rather robust w.r.t.\ the details of the contact graph structure. \par\medskip

In our opinion, the present work indicates that we gain a better and better insight into contact tracing on a complete graph (homogeneous model) or on a static contact network. However, the interplay of the different time scales involved (infectious period, number of contacts per edge, reformation of the contact network) is still not well understood, and requires refined models with dynamic contact graphs, and analytical methods to handle these models.

\par\bigskip
{\bf Acknowledgements}
\par\medskip
{\it This research is supported by a grant from the German Academic Exchange Service DAAD (AO), and by the Deutsche Forschungsgemeinschaft (DFG) through TUM International Graduate School of Science and Engineering (IGSSE), GSC 81, 
within the project GENOMIE\_QADOP (JM).}

\par\bigskip
{\bf Declaration of Competing Interest}
\par\medskip
{None}

\begin{appendix}
\section{A note on the simulation algorithms used}
\label{simulHints}
We simulate the SIR process (with contact tracing) on trees and on the configuration model. For the basic SIR epidemic, the Gillespie-algorithm~\citep{Doob1945,Gillespie1976} is well suited: For each process (contacts, recovery, detection), there is a clock with exponentially distributed waiting times that triggers the corresponding events. \\
In case of detection, a tracing event is triggered, and all edges are investigated, and infected nodes are removed with probability $p$. In case of trees, where only the root is the primary infected, the direction of edges and the direction in which the infection is spread coincides; in case of the configuration model, the infector/infected relation has to be stored in the nodes as an additional information. In that way, it is possible to only focus on forward or backward tracing.\\
For simulations on a tree, it is sensible to generate the tree on the fly: If an individual becomes infected, only then the (susceptible) children are generated. As the number of nodes grow exponentially with the generation, this version of the algorithm reduces the simulation time considerably.

\end{appendix}

\par\bigskip
{\bf Supplementary material}

Supplementary material associated with this article can be found, in the online version, at \url{10.1016/j.mbs.2020.108320}.
	
	\bibliographystyle{elsarticle-harv}
	\bibliography{references}


\end{document}